\newenvironment{customlem}[1]
  {\innercustomlem}
  {\endinnercustomlem}
\newenvironment{customthm}[1]
  {\innercustomthm}
  {\endinnercustomthm}
\DeclareSymbolFont{extraup}{U}{zavm}{m}{n}
\DeclareMathSymbol{\vardiamond}{\mathalpha}{extraup}{87}
\newcommand{\probbox}[1]{\centerline{\framebox{\parbox{0.95\columnwidth}{#1}}}}
\def\phi{\varphi}
\newcommand{\iffi}{\textit{iff} }
\newcommand{\dfn}{Definition}
\newcommand{\fig}{Figure}
\newcommand{\lem}{Lemma}
\newcommand{\thm}{Theorem}
\newcommand{\sect}{Section}
\newtheorem{theorem}{Theorem}
\newtheorem{lemma}[theorem]{Lemma}
\newtheorem{observation}[theorem]{Observation}
\theoremstyle{definition}
\newtheorem{definition}[theorem]{Definition}
\newtheorem{example}[theorem]{Example}
\theoremstyle{remark}
\definecolor{tim}{RGB}{0, 0, 250}
\definecolor{pio}{RGB}{200, 0, 200}
\definecolor{orange-red}{RGB}{210, 47, 0}
\definecolor{dark-green}{RGB}{51, 102, 0}
\renewcommand{\vec}[1]{\boldsymbol{#1}}
\newcommand{\var}{\mathbf{V}}
\newcommand{\const}{\mathbf{C}}
\newcommand{\ter}{\mathbf{T}}
\newcommand{\lang}{\mathcal{L}}
\newcommand{\pred}{\mathbf{P}}
\newcommand{\ar}[1]{ar(#1)}
\newcommand{\ru}{\rho}
\newcommand{\head}{\mathrm{head}}
\newcommand{\body}{\mathrm{body}}
\newcommand{\rset}{\mathcal{R}}
\newcommand{\ant}{\Gamma}
\newcommand{\con}{\Delta}
\newcommand{\sar}{\vdash}
\newcommand{\id}{(id)}
\newcommand{\conl}{(\land_{L})}
\newcommand{\conr}{(\land_{R})}
\newcommand{\negl}{(\neg_{L})}
\newcommand{\negr}{(\neg_{R})}
\newcommand{\existsl}{(\exists_{L})}
\newcommand{\existsr}{(\exists_{R})}
\newcommand{\ser}{er}
\newcommand{\eru}{\rho}
\newcommand{\seru}{s(\eru)}
\newcommand{\gt}{\mathsf{G3}}
\newcommand{\calc}{\mathsf{G3}(\rset)}
\newcommand{\prf}{\Pi}
\newcommand{\prove}{\mathtt{Prove}}
\newcommand{\true}{\mathtt{True}}
\newcommand{\false}{\mathtt{False}}
\newcommand{\Vector}[1]{\ensuremath{\vec{#1}}\xspace}
\newcommand{\pair}[1]{\langle#1\rangle}
\newcommand{\Vx}{\Vector{x}}
\newcommand{\Vy}{\Vector{y}}
\newcommand{\Vz}{\Vector{z}}
\newcommand{\Vt}{\Vector{t}}
\newcommand{\db}{\mathcal{D}}
\newcommand{\kb}{\mathcal{K}}
\newcommand{\query}{q}
\newcommand{\trig}{\tau}
\newcommand{\ch}{\mathbf{Ch}}
\newcommand{\inst}{\mathcal{I}}
\newcommand{\instb}{\mathcal{J}}
\newcommand{\homm}{\pi}
\newcommand{\sub}{\sigma}
\newcommand{\assign}{\mu}
\newcommand{\mfy}[1]{#1^{\top}}
\newcommand{\fint}{f}
\title{Connecting Proof Theory and Knowledge Representation:\\
Sequent Calculi and the Chase with Existential Rules} 
\author{%
Tim S. Lyon$^{1}$\and
Piotr Ostropolski-Nalewaja$^{1, 2}$ \\
\affiliations
1. Technische Universit\"at Dresden\\ 2. University of Wrocław\\
\emails
\{timothy\_stephen.lyon, piotr.ostropolski-nalewaja\}@tu-dresden.de
}
\begin{document}

\maketitle

\begin{abstract}
Chase algorithms are indispensable in the domain of knowledge base querying, which enable the extraction of implicit knowledge from a given database via applications of rules from a given ontology. Such algorithms have proved beneficial in identifying logical languages which admit decidable query entailment. Within the discipline of proof theory, sequent calculi have been used to write and design proof-search algorithms to identify decidable classes of logics. In this paper, we show that the chase mechanism in the context of existential rules is in essence the same as proof-search in an extension of Gentzen's sequent calculus for first-order logic. Moreover, we show that proof-search generates universal models of knowledge bases, a feature also exhibited by the chase. Thus, we formally connect a central tool for establishing decidability proof-theoretically with a central decidability tool in the context of knowledge representation.
\end{abstract}


\section{Introduction}

{\bf Existential Rules and the Chase.} The formalism of existential rules is a significant sub-discipline within the field of knowledge representation, offering insightful results within the domain of ontology-based query answering~\cite{BagLecMugSal09}, data exchange and integration~\cite{FagKolMilPop05}, and serving a central role in the study of generic decidability criteria~\cite{FelLyoOstRud23}.\footnote{Existential rules are also referred to as a \emph{tuple-generating dependencies}~\cite{AbiteboulHV95}, \emph{conceptual graph rules}~\cite{SalMug96}, Datalog$^\pm$~\cite{Gottlob09}, and \emph{$\forall \exists$-rules}~\cite{BagLecMugSal11} in the literature.} Ontology-based query answering is one of the principal problems studied within the context of existential rules, and asks if a query is logically entailed by a given knowledge base (KB) $\mathcal{K} = (\db,\rset)$, where $\db$ is a database and $\rset$ is a finite set of existential rules~\cite{BagLecMugSal11}. Databases generally consist of positive atomic facts such as $Female(Marie)$ or $Mother(Zuza, Marie)$, while existential rules---which are first-order formulae of the form $\forall{\Vx \Vy} \beta(\Vx, \Vy) \to \exists{\Vz} \alpha(\Vy, \Vz)$ with $\beta$ and $\alpha$ conjunctions of atoms---are used to encode a logical theory or ontology that permits the extraction of implicit knowledge from the encompassing KB. 

The primary tool for studying query answering within this setting is the so-called \emph{chase}, an algorithm that iteratively saturates a given database under applications of existential rules~\cite{BeeVar84}. The chase is useful in that it generates a \emph{universal model} satisfying exactly those queries entailed by a KB, and thus, allows for the reduction of query entailment to query checking over the constructed universal model~\cite{DeuNasRem08}. In this paper, we show how the chase corresponds to 
 proof-search in an extension of Gentzen's sequent calculus, establishing a connection between a central tool in the theory of existential rules with the primary decidability tool in proof theory.

\medskip

\noindent
{\bf Sequent Calculi and Proof-Search.} Since its introduction, Gentzen's sequent formalism~\cite{Gen35a,Gen35b} has become one of the preferred proof-theoretic frameworks for the creation and study of proof calculi. A sequent is an object of the form $\ant \sar \con$ such that $\ant$ and $\con$ are finite (multi)sets of logical formulae, and a sequent calculus is a set of inference rules that operate over such. Sequent systems, and generalizations thereof, have proved beneficial in establishing (meta)logical properties with a diverse number of applications, being used to write decision algorithms~\cite{Dyc92,Sla97}, to calculate interpolants~\cite{Mae60,LyoTiuGorClo20}, and have even been applied in knowledge intergation scenerios~\cite{LyoGom22}. 

It is well-known that \emph{geometric implications}, i.e. first-order formula of the form $\forall \vec{x} (\phi \rightarrow \exists \vec{y}_{1} \psi_{1} \lor \cdots \lor \exists \vec{y}_{n} \psi_{n})$ with $\phi$ and $\psi_{i}$ conjunctions of atoms, can be converted into an inference rules in a sequent calculus~\cite[p.~24]{Sim94}. Since such formulae subsume the class of existential rules, we may leverage this insight to extend Gentzen's sequent calculus for first-order logic with such rules to carry out existential rule reasoning. When we do so, we find that sequent systems mimic existential rule reasoning and proof-search (described below) simulates the chase.

Proof-search is the central means by which decidability is obtained with a sequent calculus, and usually operates by applying the inference rules of a sequent calculus bottom-up on an input sequent with the goal of constructing a proof thereof. If a proof of the input is found, the input is confirmed to be valid, and if a proof of the input is not found, a counter-model can typically be extracted witnessing the invalidity of the input. We make the novel observation that counter-models extracted from proof-search (in the context of existential rules) are universal, being homomorphically equivalent to the universal model generated by the chase.

\medskip

\noindent
{\bf Contributions.} Our contributions in this paper are as follows: (1) We establish a strong connection between tools in the domain of existential rules with that of proof theory; in particular, we show how to transform derivations with existential rules into sequent calculus proofs and vice versa. (2) We establish a correspondence between the chase and sequent-based proof-search, and (3) we recognize that proof-search, like the chase, generates universal models for knowledge bases, which is a novel, previously unknown insight regarding the capability of sequent systems.

\medskip

\noindent
{\bf Organization.} The preliminaries are located in \sect~\ref{sec:prelims}. In \sect~\ref{sec:sequents}, we present the sequent calculus framework and write a proof-search algorithm that simulates the chase. Correspondences between existential rule reasoning and sequent-based reasoning are explicated in \sect~\ref{sec:simulations}, and in \sect~\ref{sec:conclusion}, we conclude and discuss 
 future research. We note that most proofs have been deferred to the appendix. 




\section{Preliminaries and Existential Rules}\label{sec:prelims}

\newcommand{\DisplayProofScaled}{\scalebox{0.85}{\DisplayProof}}

\begin{figure*}[t]

\begin{center}
\begin{tabular}{c c c c c}
\AxiomC{}
\RightLabel{$\id$}
\UnaryInfC{$\ant, p(\vec{t}) \sar p(\vec{t}), \con$}
\DisplayProofScaled

&

\AxiomC{$\ant \sar \phi, \con$}
\RightLabel{$\negl$}
\UnaryInfC{$\ant, \neg \phi \sar \con$}
\DisplayProofScaled

&

\AxiomC{$\ant, \phi \sar \con$}
\RightLabel{$\negr$}
\UnaryInfC{$\ant \sar \neg \phi, \con$}
\DisplayProofScaled

&

\AxiomC{$\ant, \phi, \psi \sar \con$}
\RightLabel{$\conl$}
\UnaryInfC{$\ant, \phi \land \psi \sar \con$}
\DisplayProofScaled

\end{tabular}
\end{center}

\begin{center}
\begin{tabular}{c c c}
\AxiomC{$\ant \sar \phi, \con$}
\AxiomC{$\ant \sar \psi, \con$}
\RightLabel{$\conr$}
\BinaryInfC{$\ant \sar \phi \land \psi, \con$}
\DisplayProofScaled

&

\AxiomC{$\ant, \phi(y/x) \sar \con$}
\RightLabel{$\existsl~\text{$y$ fresh}$}
\UnaryInfC{$\ant, \exists x \phi \sar \con$}
\DisplayProofScaled

&

\AxiomC{$\ant \sar \exists x \phi, \phi(t/x), \con$}
\RightLabel{$\existsr$~$t \in \ter$}
\UnaryInfC{$\ant \sar \exists x \phi, \con$}
\DisplayProofScaled
\vspace{-2mm}

\end{tabular}
\end{center}

\caption{The sequent calculus $\gt$ for first-order logic.\label{fig:sequent-calc}}

\end{figure*}
\let\DisplayProofScaled\undefined

\noindent
\textbf{Formulae and Syntax.} We let $\const$ and $\var$ be two disjoint denumerable sets of \emph{constants} and \emph{variables}. We use $a, b, c, \ldots$ to denote constants and $x, y, z, \ldots$ to denote variables. We define the set of \emph{terms} to be $\ter = \const \cup \var$, and we denote terms by $t$ and annotated versions thereof. Moreover, we let $\pred = \{p, q, r, \ldots\}$ be a denumerable set of \emph{predicates} containing denumerably many predicates of each arity $n \in \mathbb{N}$, and use $\ar{p} = n$ to denote that $p \in \pred$ is of arity $n$. An \emph{atom} is a formula of the form $p(t_{1}, \ldots, t_{n})$ such that $t_{1}, \ldots, t_{n} \in \ter$ and $\ar{p} = n$. We will often write atoms as $p(\vec{t})$ with $\vec{t} = t_{1}, \ldots, t_{n}$. The \emph{first-order language} $\lang$ is defined 
 via the following grammar in Backus–Naur form:
$$
\phi ::= p(\vec{t}) \ | \ \neg \phi \ | \ \phi \land \phi \ | \ \exists x \phi
$$
 such that $p \in \pred$, $\vec{t} \in \ter$, and $x \in \var$. We use $\phi$, $\psi$, $\chi$, $\ldots$ to denote \emph{formulae} from $\lang$, and define $\phi \lor \psi := \neg (\neg \phi \land \neg \psi)$, $\phi \rightarrow \psi := \neg \phi \lor \psi$, and $\forall x \phi := \neg \exists x \neg \phi$. The occurrence of a variable is \emph{free} in a formula $\phi$ when it does not occur within the scope of a quantifier. We let $\phi(t/x)$ represent the formula obtained by substituting the term $t$ for every free occurrence of the variable $x$ in $\phi$. We use $\Gamma$, $\Delta$, $\Sigma$, $\ldots$ to denote sets of formulae from $\lang$, let $\var(\Gamma)$ denote the set of free variables in the formulae of $\Gamma$, and let $\ter(\Gamma)$ denote the set of free variables and constants occurring in the formulae of $\Gamma$. We let $i \in [n]$ represent $1 \leq i \leq n$, and define a \emph{ground atom} to be an atom $p(t_{1}, \ldots, t_{n})$ such that for each $i \in [n]$, $t_{i} \in \const$. An \emph{instance} $\inst$ is defined to be a (potentially infinite) set of atoms, and a \emph{database} $\db$ is defined to be a finite set of ground atoms. We let $\top$ be a special unary predicate and define $\mfy{\inst} = \inst \cup \{\top(c) \ | \ c \in \const\}$. An instance $\inst$ is referred to as an \emph{interpretation} \iffi $\mfy{\inst} = \inst$.
 
\medskip

\noindent
\textbf{Substitutions.} A \emph{substitution} $\sub$ is defined to be a partial function over~$\mathbf{T}$. A \emph{homomorphism} from an instance $\inst$ to an instance $\instb$ is a substitution $\homm$ from the terms of $\inst$ to the terms of~$\instb$ such that (1) if $p(t_1, \ldots, t_n) \in \inst$, then $p(\homm(t_1), \ldots, \homm(t_n)) \in \instb$, and (2) $\homm(a) = a$, for each~$a \in \const$. We say that an instance $\inst$ \emph{homomorphically maps} into an instance $\instb$ \iffi a homomorphism exists from $\inst$ to $\instb$. Two instances $\inst$ and $\instb$ are defined to be \emph{homomorphically equivalent}, written $\inst \equiv \instb$, \iffi each instance can be homomorphically mapped into the other. An \emph{$\inst$-assignment} is defined to be a substitution $\assign$ such that (1) $\assign(x) \in \ter(\inst)$, for each $x \in \var$, and (2) $\assign(a) = a$, for each~$a \in \const$. 
 For an $\inst$-assignment $\assign$, we let $\assign(\phi)$ denote the formula obtained by replacing each free variable of $\phi$ with its value under $\assign$, and we let $\assign[\vec{t}/\vec{x}]$ be the same as $\assign$, but where the variables $\vec{x}$ are respectively mapped to $\vec{t} \in \ter$.

\medskip

\noindent
\textbf{Models and Satisfaction.} Given an interpretation $\inst$ and an $\inst$-assignment $\assign$, we recursively define satisfaction $\models$ as:
\begin{flushleft}
(1) $\inst, \assign \models p(t_{1}, \ldots,t_{n})$ \iffi $p(\assign(t_{1}), \ldots, \assign(t_{n})) \in \inst$;\\
(2) $\inst, \assign \models \neg \phi$ \iffi $\inst, \assign \not\models \phi$;\\
(3) $\inst, \assign \models \phi \land \psi$ \iffi $\inst, \assign \models \phi$ and $\inst, \assign \models \psi$;\\
(4) $\inst, \assign \models \exists x \phi$ \iffi $t \in \ter(\inst)$ exists and $\inst, \assign[t/x] \models \phi$.
\end{flushleft}
 We say that $\inst$ is a \emph{model} of $\Gamma$ and write $\inst \models \Gamma$ \iffi for every $\phi \in \Gamma$ and $\inst$-assignment $\assign$, we have $\inst, \assign \models \phi$. We define an instance $\inst$ to be a \emph{universal model} of $\Gamma$ \iffi for any model $\instb$ of $\Gamma$ there exists a homomorphism from $\inst$ to $\instb$.

\medskip

\noindent
\textbf{Existential Rules.} An \emph{existential rule} is a first-order formula $\rho = \forall{\Vx\Vy}\; \beta(\Vx, \Vy) \to \exists{\Vz}\; \alpha(\Vy, \Vz)$ such that $\beta(\Vx, \Vy) = \body(\rho)$ (called the body) and $\alpha(\Vy, \Vz) = \head(\rho)$ (called the head) are conjunctions of atoms over constants and the variables $\Vx, \Vy$ and $\Vy, \Vz$, respectively. We call a finite set $\rset$ of existential rules a \emph{rule set}. We define $\Gamma$ to be \emph{$\rset$-valid} \iffi for every interpretation $\inst$, if $\inst \models \rset$, then $\inst \models \Gamma$.

\medskip

\noindent
\textbf{Derivations and the Chase.} We say that an existential rule $\rho$ is \emph{applicable} to an instance $\inst$ \iffi there exists an $\inst$-assignment $\assign$ such that $\assign(\beta(\Vx, \Vy)) \subseteq \inst$, and when this is the case, we say that $\trig = (\ru, \assign)$ is a \emph{trigger} in $\inst$. Given a trigger $\trig = (\ru, \assign)$ in $\inst$ we define an \emph{application} of the trigger $\trig$ to the instance $\inst$ to be the instance $\trig(\inst) = \inst \cup \alpha(\assign(\Vy), \Vz)$ where 
 $\Vz$ is a tuple of fresh variables. We define a \emph{chase derivation} $(\inst_i, \trig_i)_{i \in [n]}$ to be a sequence $(\inst_{1}, \trig_{1}), \ldots, (\inst_{n}, \trig_{n}), (\inst_{n+1}, \emptyset)$ such that for every $i \in [n]$, $\trig_{i}$ is a trigger in $\inst_{i}$ and $\trig_i(\inst_i) = \inst_{i+1}$. For an instance $\inst$ and a rule set $\rset$, we define the \emph{one-step chase} to be:
$$
\ch_{1}(\inst,\rset) = \!\!\!\!\!\!\!\! \bigcup_{\text{$\trig$ is a trigger in $\inst$}} \!\!\!\!\!\!\!\! \trig(\inst).
$$
 We let $\ch_{0}(\inst,\rset) = \inst$ as well as let $\ch_{n+1}(\inst,\rset) = \ch_{1}(\ch_{n}(\inst, \rset), \rset)$. Finally, we define the \emph{chase} to be $\ch_{\infty}(\inst,\rset) = \mfy{(\bigcup_{i \in \mathbb{N}} \ch_{i}(\inst,\rset))}$, which serves as a universal model of $\inst \cup \rset$~\cite{DeuNasRem08}.\footnote{We use a {\em restricted} variant of the chase; cf.~\cite{FagKolMilPop05}.}
 
\medskip

\noindent
\textbf{Queries and Entailment.} A \emph{Boolean conjunctive query (or, BCQ)} is a formula $\exists \vec{x} \query(\vec{x})$ such that $\query(\Vx)$ is a conjunction of atoms over the variables $\Vx$ and constants. We define a \emph{knowledge base (or, KB)} to be an ordered pair $\kb = (\db,\rset)$ with $\db$ a database and $\rset$ a rule set, and let $\inst$ be a \emph{model} of $\kb$, written $\inst \models \kb$, \iffi $\inst \models \db \cup \rset$. We write $\kb \models \exists \vec{x} \query(\vec{x})$ to mean that for every $\inst$, if $\inst \models \kb$, then $\inst \models \exists \vec{x} \query(\vec{x})$.  A chase derivation $(\inst_i, \trig_i)_{i \in [n]}$ \emph{witnesses} $(\db, \rset) \models \exists \vec{x} \query(\vec{x})$ \iffi $\inst_{1} = \db$, only rules from $\rset$ are applied, and there exists an $\inst_{n+1}$-assignment $\assign$ such that $\assign(\query(\vec{x})) \subseteq \inst_{n+1}$.\\ 


\section{Sequent Systems and Proof-Search}\label{sec:sequents}

We define a \emph{sequent} to be an object of the form $\ant \sar \con$ such that $\ant$ and $\con$ are \emph{finite} sets of formulae from $\lang$. 
 Typically, multisets are used in sequents rather than sets, however, we are permitted to use sets in the setting of classical logic; cf.~\cite{Kle52}. For a sequent $\ant \sar \con$, we call $\ant$ the \emph{antecedent} and $\con$ the \emph{consequent}. We define the \emph{formula interpretation} of a sequent to be $\fint(\ant \sar \con) = \bigwedge \Gamma \rightarrow \bigvee \Delta$. 

 The sequent calculus $\gt$ \cite{Kle52} for first-order logic is defined to be the set of inference rules presented in \fig~\ref{fig:sequent-calc}. It consists of the \emph{initial rule} $\id$ along with \emph{logical rules} that introduce complex logical formulae in either the antecedent or consequent of a sequent. The $\existsl$ rule is subject to a side condition, stating that the rule is applicable only if $y$ is \emph{fresh}, i.e. $y$ does not occur in the surrounding context $\Gamma,\Delta$. The $\existsr$ rule allows for the bottom-up instantiation of an existentially quantified formula with a term $t$. An \emph{application} of a rule is obtained by instantiating the rule with formulae from $\lang$. We call an application of rule \emph{top-down} (\emph{bottom-up}) whenever the conclusion (premises) is (are) obtained from the premises (conclusion).

 It is well-known that every \emph{geometric implication}, which is a formula of the form $\forall \vec{x} (\phi \rightarrow \exists \vec{y}_{1} \psi_{1} \lor \cdots \lor \exists \vec{y}_{n} \psi_{n})$ with $\phi$ and $\psi_{i}$ conjunctions of atoms, can be converted into an inference rule; see~\cite[p.~24]{Sim94} for a discussion. We leverage this insight to transform existential rules (which are special instances of geometric implications) into inference rules that can be added to the sequent calculus $\gt$. For an existential rule $\rho = \forall \vec{x} \vec{y} \beta(\vec{x},\vec{y}) \rightarrow \exists \vec{z} \alpha(\vec{y},\vec{z})$, we define its corresponding \emph{sequent rule} $s(\rho)$ to be:
\begin{center}
\small
\AxiomC{$\ant, \beta(\vec{x},\vec{y}), \alpha(\vec{y},\vec{z}) \sar \con$}
\RightLabel{$\seru$~\text{$\vec{z}$ fresh}}
\UnaryInfC{$\ant, \beta(\vec{x},\vec{y}) \sar \con$}
\DisplayProof
\end{center}
Note that we take the body $\beta(\vec{x},\vec{y})$ and head $\alpha(\vec{y},\vec{z})$ to be sets of atoms, rather than conjunctions of atoms, and we note that $\vec{x},\vec{y}$ may be instantiated with terms in rule applications. Also, $\seru$ is subject to the side condition that the rule is applicable only if all variables $\vec{z}$ are fresh. We define the sequent calculus $\calc = \gt \cup \{s(\rho) \ | \ \rho \in \rset\}$. We define a {\em derivation} to be any sequence of applications of rules in $\calc$ to arbitrary sequents, define an \emph{$\rset$-derivation} to be a derivation that only applies $\seru$ rules, and define a \emph{proof} to be a derivation starting from applications of the $\id$ rule. An example of a proof is shown on the left side of Figure~\ref{fig:simulation-example}. 

\begin{theorem}[Soundness and Completeness]\label{thm:sound-comp}
$\fint(\ant \sar \con)$ is $\rset$-valid \iffi there exists a proof of $\ant \sar \con$ in $\calc$.
\end{theorem}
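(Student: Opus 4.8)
The plan is to prove the biconditional $\fint(\ant \sar \con)$ is $\rset$-valid \iffi there exists a proof of $\ant \sar \con$ in $\calc$ by establishing the two directions separately, following the standard template for soundness and completeness of a sequent calculus, but now relativized to $\rset$-validity rather than plain first-order validity.

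For the \textbf{soundness} direction (proof exists $\Rightarrow$ $\rset$-valid), I would argue by induction on the height of the proof of $\ant \sar \con$ in $\calc$. The crux is a rule-by-rule verification that each inference rule of $\calc$ \emph{preserves $\rset$-validity downward}: if the formula interpretations of all premises are $\rset$-valid, then so is that of the conclusion. For the base case, $\id$ yields $\fint(\ant, p(\vec t) \sar p(\vec t), \con) = \bigwedge(\ant \cup \{p(\vec t)\}) \to \bigvee(\{p(\vec t)\} \cup \con)$, which is valid (hence $\rset$-valid) since the antecedent conjunct $p(\vec t)$ appears as a consequent disjunct. For the $\gt$ logical rules, the semantic clauses for $\neg$, $\land$, and $\exists$ from the satisfaction definition give the needed preservation; the only subtle cases are $\existsl$ and $\existsr$, where one reasons about the fresh eigenvariable and about the choice of witnessing term $t \in \ter$, respectively, using clause (4) of $\models$. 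The genuinely new case is the rule $\seru$ for each $\rho \in \rset$: here I would take any interpretation $\inst$ with $\inst \models \rset$ that falsifies the conclusion $\fint(\ant, \beta(\vec x, \vec y) \sar \con)$, meaning some assignment makes $\ant, \beta(\vec x, \vec y)$ all true and $\con$ all false. Since $\inst \models \rho$ and $\beta(\vec x, \vec y)$ holds, the rule $\rho$ forces witnesses for $\vec z$ so that $\alpha(\vec y, \vec z)$ holds in $\inst$ under an extended assignment, and because $\vec z$ is fresh this extension does not disturb $\ant$ or $\con$; thus the premise is also falsified in $\inst$, contradicting its assumed $\rset$-validity.

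For the \textbf{completeness} direction ($\rset$-valid $\Rightarrow$ proof exists), I would proceed contrapositively via a proof-search / counter-model construction. Assuming no proof of $\ant \sar \con$ exists in $\calc$, I would run exhaustive bottom-up proof-search and collect the formulae accumulated along a failed (saturated) branch, forming a pair $(\ant^{*}, \con^{*})$ that is \emph{saturated}: closed under every applicable rule bottom-up and never an instance of $\id$, so $\ant^{*} \cap \con^{*}$ contains no shared atom. From the antecedent atoms of this saturated branch I would build the canonical interpretation $\inst$ whose atoms are exactly the atomic formulae in $\ant^{*}$ (closed off under $\top$ via the $\mfy{\cdot}$ operation). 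A routine ``truth-lemma'' induction on formula structure then shows that under the identity-like assignment every formula of $\ant^{*}$ is satisfied by $\inst$ and every formula of $\con^{*}$ is falsified, where saturation under each logical rule supplies the corresponding inductive step (e.g. saturation under $\existsr$ guarantees a witness term is present, and saturation under $\existsl$ supplies the fresh witness needed for antecedent existentials). Crucially, saturation under every $\seru$ rule guarantees $\inst \models \rho$ for each $\rho \in \rset$, so $\inst \models \rset$; yet $\inst$ falsifies $\fint(\ant \sar \con)$, witnessing that this formula is \emph{not} $\rset$-valid.

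The main obstacle I anticipate is making the saturated-branch construction in the completeness direction fully rigorous in the presence of the fresh-variable side conditions on $\existsl$ and $\seru$ together with the potentially nonterminating, infinitely-branching character of proof-search. One must schedule rule applications \emph{fairly} so that every applicable trigger and every existential is eventually processed, and argue that the resulting (possibly infinite) instance genuinely satisfies all rules of $\rset$ despite new atoms continually introducing new triggers—exactly the fairness issue underlying the chase. I would handle this with a standard fair enumeration (a dovetailing schedule over all potential rule instances and witnessing terms), ensuring the limit instance is closed under all rules, and then verify the truth lemma against this limit; the freshness conditions guarantee that eigenvariables introduced for $\exists$ and for the $\vec z$ in $\seru$ behave as genuine, conflict-free witnesses. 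The close parallel between this fair saturation procedure and the construction of $\ch_{\infty}(\inst, \rset)$ is precisely the bridge the paper exploits, so I would phrase the argument to foreshadow that correspondence.
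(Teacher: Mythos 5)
Your proposal is correct and follows essentially the same route as the paper: soundness by induction on the proof, arguing contrapositively rule-by-rule (with the $\seru$ case handled exactly as the paper does, using $\inst \models \rho$ to extend the falsifying assignment to the head atoms), and completeness by fair, exhaustive bottom-up search, extraction of an infinite failed branch, and a truth-lemma induction on the limit sets to build a counter-model satisfying $\rset$ but falsifying $\fint(\ant \sar \con)$. The only quibble is your phrase ``infinitely-branching'': the derivation tree is finitely branching (which is what licenses the K\"onig's-lemma step the paper uses); the difficulty is unbounded depth, which your fair dovetailing schedule already addresses.
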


\begin{figure}
\small
\centering
\scalebox{.9}{

\probbox{
	\textbf{Algorithm:} $\prove$\\
	\textbf{Input:} A sequent $\ant \sar \con$.\\
	\textbf{Output:} A Boolean $\true$ and $\false$.\\[1ex]
\textbf{If} $\ant \sar \con$ is saturated, \textbf{Return} $\false$;\\
\textbf{If} there exists a $p(\vec{t}) \in \ant \cap \con$, \textbf{Return} $\true$;\\
\textbf{If} $\phi \land \psi \in \con$, but $\phi, \psi \not\in \Delta$,\\
\hbox{}\hspace{5mm} \textbf{Set} $\con_{1} := \phi, \con$ and $\con_{2} := \psi, \con$;\\
\hbox{}\hspace{5mm} \textbf{If} $\prove(\ant \sar \con_{i}) = \false$ for some $i \in \{1,2\}$,\\
\hbox{}\hspace{5mm}\hbox{}\hspace{5mm} \textbf{Return} $\false$;\\
\hbox{}\hspace{5mm} \textbf{Else}\\
\hbox{}\hspace{5mm}\hbox{}\hspace{5mm} \textbf{Return} $\true$;\\
\textbf{If} $\exists x \phi \in \con$ and $t \in \ter(\ant)$, but $\phi(t/x) \not\in \con$,\\
\hbox{}\hspace{5mm} \textbf{Set} $\con := \phi(t/x), \con$;\\
\hbox{}\hspace{5mm} \textbf{Return} $\prove(\ant \sar \con)$\\
\textbf{Let} $\ru =  \forall{\Vx\Vy}\; \beta(\Vx, \Vy) \to \exists{\Vz}\; \alpha(\Vy, \Vz)$ be the next rule accord-\\
\hbox{}\hspace{5mm} ing to $\prec$ (if no rule has yet been picked, choose one in $\rset$);\\
\textbf{If} a $\ant$-assignment $\assign$ exists such that $\assign(\beta(\Vx, \Vy)) \subseteq \ant$, but no\\
\hbox{}\hspace{5mm} terms $\vec{t} \in \ter(\ant)$ exist such that $\assign[\vec{t}/\vec{z}](\alpha(\Vy, \Vz)) \subseteq \ant$;\\
\hbox{}\hspace{5mm} \textbf{Set} $\ant := \alpha(\assign(\Vy),\Vz), \ant$ with $\Vz$ fresh;\\
\hbox{}\hspace{5mm} \textbf{Return} $\prove(\ant \sar \con)$;
}}
\caption{The proof-search algorithm $\prove$.\label{fig:proof-search}}
\end{figure}

\newcommand{\vsep}[1]{\resizebox{0.1pt}{#1}{\phantom{.}}}

\begin{figure*}
\scalebox{0.8}{
\begin{minipage}{.55\textwidth}
\AxiomC{}
\RightLabel{$\id$}
\UnaryInfC{$\ant \sar \exists x(\mathtt{A}(x,a) \land \mathtt{F}(x)), \mathtt{A}(c,a)$} 

\AxiomC{}
\RightLabel{$\id$}
\UnaryInfC{$\ant \sar \exists x(\mathtt{A}(x,a) \land \mathtt{F}(x)), \mathtt{F}(c)$} 

\RightLabel{$\conr$}
\BinaryInfC{$\ant \sar \exists x(\mathtt{A}(x,a) \land \mathtt{F}(x)), \mathtt{A}(c,a) \land \mathtt{F}(c)$} 
\RightLabel{$\existsr$}
\UnaryInfC{$\mathtt{M}(b,a), \mathtt{A}(b,a), \mathtt{F}(b), \mathtt{M}(c,b), \mathtt{A}(c,b), \mathtt{F}(c), \mathtt{A}(c,a) \sar \exists x(\mathtt{A}(x,a) \land \mathtt{F}(x))$}
\RightLabel{$s(\rho_{2})$}
\UnaryInfC{$\mathtt{M}(b,a), \mathtt{A}(b,a), \mathtt{F}(b), \mathtt{M}(c,b), \mathtt{A}(c,b), \mathtt{F}(c) \sar \exists x(\mathtt{A}(x,a) \land \mathtt{F}(x))$}
\RightLabel{$s(\rho_{1})$}
\UnaryInfC{$\mathtt{M}(b,a), \mathtt{A}(b,a), \mathtt{F}(b), \mathtt{M}(c,b) \sar \exists x(\mathtt{A}(x,a) \land \mathtt{F}(x))$}
\RightLabel{$s(\rho_{1})$}
\UnaryInfC{$\mathtt{M}(b,a), \mathtt{M}(c,b) \sar \exists x(\mathtt{A}(x,a) \land \mathtt{F}(x))$}
\DisplayProof
\end{minipage}

\begin{minipage}{.25\textwidth}
\ \\
\end{minipage}
\begin{minipage}{.2\textwidth}
\begin{tikzpicture}
		
		\node[] (a2) [] {$a$}; 
		\node[] (b2) [above=of a2,label={[label distance=1pt]0:$\mathtt{F}$}] {$b$};
		\node[] (z12) [above=of b2,label={[label distance=1pt]90:$\mathtt{F}$}] {$c$};
		
		\draw[->] (z12) -- (b2) node [midway,xshift=-0.5em,label={[label distance=1pt]0:$\mathtt{M}$}] {};
		\draw[->] (b2) -- (a2) node [midway,xshift=-0.5em,label={[label distance=1pt]0:$\mathtt{M}$}] {};
		
    \draw [->](b2) to [out=225,in=135,looseness=1] node[midway,label={[label distance=0pt]180:$\mathtt{A}$}] {} (a2);
    
    \draw [->](z12) to [out=225,in=135,looseness=1] node[midway,label={[label distance=0pt]180:$\mathtt{A}$}] {} (b2);
    
    \draw [->](z12) to [out=-45,in=45,looseness=1] node[midway,label={[label distance=0pt]0:$\mathtt{A}$}] {} (a2);

		\node[] (x) [right=of z12,xshift=1cm,label={[label distance=1pt]90:$\mathtt{F}$}] {$x$};
		\node[] (a) [below=of x,yshift=-4.25em] {$a$};
		
    \draw [->](x) to [out=-45,in=45,looseness=1] node[midway,label={[label distance=0pt]0:$\mathtt{A}$}] {} (a);
    
	\draw[->,dotted] (x) -- (z12) node [midway,label={[label distance=1pt]90:$\assign$}] {};
	\draw[->,dotted] (a) -- (a2) node [midway,label={[label distance=1pt]90:$\assign$}] {};

\end{tikzpicture}
\end{minipage}
}
\caption{Above left is a proof in $\calc$ witnessing that $\kb \models \exists x(\mathtt{A}(x,a) \land \mathtt{F}(x))$, where $\kb = (\db, \rset)$ is as defined in Example~\ref{ex:example-proof-chase} and $\ant = \mathtt{M}(b,a), \mathtt{A}(b,a), \mathtt{F}(b), \mathtt{M}(c,b), \mathtt{A}(c,b), \mathtt{F}(c), \mathtt{A}(c,a)$. Above right is an illustration showing that the BCQ $\exists x(\mathtt{A}(x,a) \land \mathtt{F}(x))$ (to the right) can be mapped into the chase $\ch_{\infty}(\db,\rset)$ (to the left) via the $\ch_{\infty}(\db,\rset)$ -assignment $\assign$ (dotted arrows).\label{fig:simulation-example}}
\end{figure*}
 
We now define a proof-search algorithm that decides (under certain conditions) if a BCQ is entailed by a knowledge base. The algorithm $\prove$ (shown in \fig~\ref{fig:proof-search}) takes a sequent of the form $\db \sar \exists\vec{x} \query(\vec{x})$ as input and bottom-up 
 applies inference rules from $\calc$ with the goal of constructing a proof thereof. Either, $\prove$ returns a proof witnessing that $(\db,\rset) \models \exists\vec{x} \query(\vec{x})$, or a counter-model to this claim can be extracted from failed proof search. The functionality of this algorithm depends on certain \emph{saturation conditions}, defined in \dfn~\ref{def:saturation} below, and which determine when a rule from $\calc$ is bottom-up applicable. 
 Due to the shape of the input $\db \sar \exists\vec{x} \query(\vec{x})$, only $\id$, $\conr$, $\existsr$, and $\seru$ rules are applicable during proof search. 
Moreover, we let $\prec$ be an arbitrary cyclic order over $\rset = \{\ru_{1}, \ldots, \ru_{n}\}$, that is, $\ru_{1} \prec \ru_{2} \cdots \ru_{n-1} \prec \ru_{n} \prec \ru_{1}$. We use $\prec$ to ensure the \emph{fair application} of $\seru$ rules during proof-search, meaning that no bottom-up rule application is delayed indefinitely.

\begin{definition}[Saturation]\label{def:saturation} Let $\ant \sar \con$ be a sequent. We say that $\ant \sar \con$ is \emph{saturated} \iffi it satisfies the following:
\begin{description}

\item[$id$.] if $p(\vec{t}) \in \Gamma$, then $p(\vec{t}) \not\in \Delta$;




\item[$\land_{R}$.] If $\phi \land \psi \in \Delta$, then either $\phi\in \Delta$ or $\psi \in \Delta$;


\item[$\exists_{R}$.] If $\exists x \phi \in \Delta$, then for every $t \in \ter(\Gamma)$, $\phi(t/x) \in \Delta$;

\item[$\ser$.] For each $\ru \in \rset$, if a $\ant$-assignment $\assign$ exists such that $\assign(\body(\ru)) \subseteq \ant$, then there exist $\vec{t} \in \ter(\Gamma)$ such that $\assign[\vec{t}/\vec{z}](\head(\ru)) \subseteq \ant$ holds.

\end{description}
\end{definition}

\begin{theorem}\label{thm:correctness-proof-search} Let $\rset$ be a rule set, $\db$ be a database, and $\exists\vec{x} \query(\vec{x})$ be a BCQ. Then,
\begin{enumerate}

\item If $\prove(\db \sar \exists\vec{x} \query(\vec{x})) = \true$, then a proof in $\calc$ can be constructed witnessing that $(\db, \rset) \models \exists\vec{x} \query(\vec{x})$;

\item If $\prove(\db \sar \exists\vec{x} \query(\vec{x})) \neq \true$, then a universal model can be constructed witnessing that $(\db, \rset) \not\models \exists\vec{x} \query(\vec{x})$.

\end{enumerate}
\end{theorem}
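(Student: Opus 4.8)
The plan is to prove both parts by relating runs of $\prove$ to the chase and to proof-search in $\calc$, using Theorem~\ref{thm:sound-comp} (soundness and completeness) as the bridge to semantic validity. The central object is the sequent that $\prove$ maintains in its antecedent: starting from $\db \sar \exists\vec{x}\query(\vec{x})$, every bottom-up application of an $\seru$ rule appends $\alpha(\assign(\Vy),\Vz)$ to $\ant$ with fresh $\Vz$, which is exactly the effect of a chase trigger application $\trig(\inst) = \inst \cup \alpha(\assign(\Vy),\Vz)$. So the first step is to make this precise: I would show by induction on the number of steps that the antecedent reached by $\prove$ after processing $\seru$ rules coincides (up to renaming of the fresh variables) with an instance $\inst_{k}$ appearing in a chase derivation starting from $\db$, and conversely that any such chase instance arises as an antecedent of some run of $\prove$. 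The saturation condition $\ser$ in Definition~\ref{def:saturation} is precisely the statement that no trigger is active, i.e.\ that the antecedent is chase-complete, so ``the antecedent is saturated for $\ser$'' means exactly ``the antecedent is a model of $\rset$ restricted to its own terms.'' The fairness induced by the cyclic order $\prec$ guarantees that in the limit of an infinite run the antecedent saturates every trigger, matching $\ch_{\infty}(\db,\rset)$ up to homomorphic equivalence.

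For part (1), I would argue as follows. If $\prove$ returns $\true$, it did so by reaching the branch ``there exists $p(\vec{t}) \in \ant \cap \con$,'' possibly after recursive calls for $\conr$ that all returned $\true$. First I would observe that every leaf of the resulting search tree closes via the $\id$ condition, and that the rule applications recorded along each branch are genuine $\calc$-rule instances (here I must check that the side conditions are met: freshness of $\Vz$ for $\seru$, and $t \in \ter(\ant)$ for $\existsr$, both of which are enforced by the algorithm). Assembling these into a tree gives a bona fide proof in $\calc$ of $\db \sar \exists\vec{x}\query(\vec{x})$. By Theorem~\ref{thm:sound-comp}, $\fint(\db \sar \exists\vec{x}\query(\vec{x})) = \bigwedge\db \rightarrow \exists\vec{x}\query(\vec{x})$ is $\rset$-valid, which unpacks to $(\db,\rset)\models \exists\vec{x}\query(\vec{x})$. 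The only subtlety is that $\prove$ may apply $\seru$ rules that enlarge $\ant$ before the closure is detected; I would note these applications are sound and that the $\id$-closure witnesses a chase derivation of the query in the sense of the ``witnesses'' definition, so the proof also witnesses entailment in the intended operational sense.

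For part (2), suppose $\prove$ does not return $\true$; then either it terminates returning $\false$ from the first line (the current sequent is saturated without being closed), or it runs forever. In either case I build the counter-model $\inst^{\ast}$ as the union of all atoms ever added to the antecedent over the (possibly infinite) fair run, closed under $\top$, i.e.\ $\inst^{\ast} = \mfy{(\bigcup_k \ant_k)}$. The key claim is that $\inst^{\ast}$ is a model of $\db\cup\rset$: it contains $\db$ by construction, and by fairness plus the $\ser$ saturation condition every trigger active in $\inst^{\ast}$ has already been satisfied, so $\inst^{\ast}\models\rset$; verifying $\inst^{\ast}\models\rset$ in the sense of the satisfaction clauses for the universally quantified implication is where the freshness/fairness bookkeeping must be done carefully. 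Next I show $\inst^{\ast}\not\models\exists\vec{x}\query(\vec{x})$: because the run is not closed, the $\id$, $\land_R$, and $\exists_R$ saturation conditions ensure that for every assignment $\assign$ into $\ter(\inst^{\ast})$ some conjunct of $\query$ is missing from the antecedent, hence from $\inst^{\ast}$, so no assignment satisfies the query. Finally I argue universality: since $\inst^{\ast}$ is obtained by the same fair trigger-application process as the chase, it is homomorphically equivalent to $\ch_{\infty}(\db,\rset)$, which is universal by the cited result of \cite{DeuNasRem08}; a universal model of $\db\cup\rset$ that falsifies the query is exactly what is required. The hardest step is establishing that $\inst^{\ast}$ genuinely satisfies $\rset$ and is universal in the infinite (non-terminating) case, where I must invoke the fairness of $\prec$ to rule out any trigger being deferred forever and then match the construction against the chase up to homomorphic equivalence; the finite, terminating case follows as a special instance once the limit argument is in place.
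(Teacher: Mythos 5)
Your overall architecture matches the paper's: part (1) is discharged by observing that a $\true$-run of $\prove$ assembles into a genuine $\calc$-proof and then invoking soundness (Theorem~\ref{thm:sound-comp}); part (2) builds the counter-model as the $\top$-closed union of the antecedents of a non-terminating fair run, uses fairness of $\prec$ to get $\inst^{\ast} \models \rset$, and uses the saturation conditions inductively to falsify everything in the accumulated consequent. Two points, one minor and one substantive. Minor: since $\conr$ branches the computation, the accumulated consequent must be taken along a \emph{single} (infinite) branch --- the $\land_{R}$ saturation condition is disjunctive per branch, so ``the union of all atoms ever added over the run'' is not quite the right object; the paper extracts one infinite path via K\"onig's lemma and works along it. This is a fixable imprecision.

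The substantive gap is your universality argument. You propose to conclude that $\inst^{\ast}$ is universal by first arguing $\inst^{\ast} \equiv \ch_{\infty}(\db,\rset)$ ``because it is obtained by the same fair trigger-application process as the chase'' and then importing the universality of the chase. But homomorphic equivalence of two (restricted) chase-like fair runs is not a consequence of their being ``the same process'': different fair runs produce non-isomorphic instances, and the standard reason they are nonetheless homomorphically equivalent is precisely that \emph{each is a universal model} --- which is the statement you are trying to prove. At minimum you would still have to construct a homomorphism from $\inst^{\ast}$ into $\ch_{\infty}(\db,\rset)$, and there is no shortcut for that: it requires the same step-by-step argument as mapping $\inst^{\ast}$ into an arbitrary model. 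This is exactly what the paper does instead: given any model $\instb$ of $(\db,\rset)$, it inductively builds an increasing chain of homomorphisms $h_{i}\colon \Gamma_{i}' \to \instb$, extending $h_{i}$ at each $\seru$-step by using the fact that $\instb$ satisfies the fired rule to find images for the fresh variables $\Vz$, and takes the union as the limit homomorphism. Your proposal correctly flags this as ``the hardest step'' but the route you sketch for it either begs the question or collapses into the paper's direct construction; you should supply that inductive homomorphism-building argument explicitly.
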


We refer to the universal model of $(\db, \rset)$ stated in the second claim of \thm~\ref{thm:correctness-proof-search} as the \emph{witnessing counter-model}.

\section{Simulations and Equivalences}\label{sec:simulations}

 We present a sequence of results which culminate in the establishment of two main theorems: (1) \thm~\ref{thm:simulation}, which confirms that chase derivations are mutually transformable with certain proofs in $\calc$, and (2) \thm~\ref{thm:chase-prove-equiv}, which confirms the equivalence of $\prove$ and the chase. We end the section by providing an example illustrating the latter correspondence between proofs and the chase.

\begin{observation}\label{lem:con-ex-permutabove-ER}
Let $\rset$ be a rule set. If $\ru \in \rset$, then any application of $\conr$ and $\existsr$ permute above $\seru$.
\end{observation}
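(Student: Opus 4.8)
The plan is to exploit the fact that $\seru$ acts only on the antecedent—replacing a conclusion $\ant, \beta(\Vx,\Vy) \sar \con$ by the premise $\ant, \beta(\Vx,\Vy), \alpha(\Vy,\Vz) \sar \con$ with $\Vz$ fresh—whereas $\conr$ and $\existsr$ act only on the consequent. Since the two kinds of rule never share a principal formula and operate on disjoint sides of the turnstile, I expect them to commute, and the proof reduces to exhibiting, for each of $\conr$ and $\existsr$, the two-step derivation with the rules applied in the opposite order and checking that all side conditions still hold. Concretely, I would argue by a local rewriting of any subderivation in which a $\seru$ is applied immediately above a $\conr$ or $\existsr$, producing an equivalent subderivation with the same open leaves and the same conclusion in which the $\conr$/$\existsr$ now sits above $\seru$.

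For the $\existsr$ case I would take the configuration in which $\existsr$ (introducing $\exists x \phi$ with witness $t$) is applied below a $\seru$, so that the conclusion $\ant, \beta(\Vx,\Vy) \sar \exists x \phi, \con$ is obtained from $\ant, \beta(\Vx,\Vy) \sar \exists x \phi, \phi(t/x), \con$, to which $\seru$ is then applied above. Because $\existsr$ has a single premise, I can simply apply $\seru$ first—going from $\ant, \beta(\Vx,\Vy), \alpha(\Vy,\Vz) \sar \exists x \phi, \con$ to $\ant, \beta(\Vx,\Vy) \sar \exists x \phi, \con$—and then reintroduce $\phi(t/x)$ by $\existsr$ above it; the top sequent $\ant, \beta(\Vx,\Vy), \alpha(\Vy,\Vz) \sar \exists x \phi, \phi(t/x), \con$ and the conclusion are unchanged. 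The two side conditions to verify are: (i) the witness $t$ remains admissible, which holds since $\ter$ only grows when $\alpha(\Vy,\Vz)$ is added to the antecedent; and (ii) the freshness of $\Vz$ is preserved—moving $\existsr$ upward only shrinks the consequent seen by $\seru$, and in the original derivation $\Vz$ was already fresh for a context containing $\phi(t/x)$, so $\Vz$ and $t$ cannot clash and no variable capture is introduced.

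For the $\conr$ case the same idea applies, but $\conr$ is branching, which is where the real work lies. Starting from a $\conr$ whose (say, left) premise $\ant, \beta(\Vx,\Vy) \sar \phi, \con$ is the conclusion of a $\seru$, I would instead apply that $\seru$ directly to the $\conr$-conclusion $\ant, \beta(\Vx,\Vy) \sar \phi \land \psi, \con$ and place $\conr$ above it, so that $\alpha(\Vy,\Vz)$ now appears in the antecedent of both premises. The left premise $\ant, \beta(\Vx,\Vy), \alpha(\Vy,\Vz) \sar \phi, \con$ is exactly the premise of the original $\seru$, so its subderivation is reused verbatim; the difficulty is the right premise, which in the original derivation proved $\ant, \beta(\Vx,\Vy) \sar \psi, \con$ without $\alpha(\Vy,\Vz)$ present. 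I would close this gap using (height-preserving) admissibility of antecedent weakening to insert $\alpha(\Vy,\Vz)$ into that branch, first renaming $\Vz$ if necessary so that the freshness side condition of the relocated $\seru$ is respected throughout.

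I expect the branching of $\conr$ to be the main obstacle: unlike $\existsr$, permuting $\conr$ below a single $\seru$ forces the head atoms into a sibling branch that did not contain them, so the argument genuinely relies on weakening admissibility and on careful management of the fresh variables $\Vz$ to avoid capture. Everything else should be routine bookkeeping, since the antecedent-only/consequent-only split guarantees that the principal formulas of the two rules never interfere and that the intermediate sequents are well-formed.
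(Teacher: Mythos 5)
Your proposal is correct and rests on exactly the observation the paper's one-line proof gives: $\seru$ acts only on the antecedent while $\conr$ and $\existsr$ act only on the consequent, so their principal formulae never interfere and the inferences can be locally interchanged. You in fact supply more detail than the paper does---notably the weakening of the sibling branch in the $\conr$ case and the renaming of $\Vz$ to preserve the freshness side condition---both of which are genuinely needed for a fully rigorous permutation argument and are handled correctly.
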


\begin{proof} It is straightforward to confirm the permutation of such rules as the $\seru$ rules operate on the antecedent of a sequent, and $\conr$ and $\existsr$ operate on the consequent.
\end{proof}

\begin{observation}\label{obs:X}
If $\inst$ is an instance, then only $\seru$ rules of $\calc$ can be bottom-up applied to $\inst \sar \emptyset$. Moreover, such an application yields a sequent $\inst' \sar \emptyset$ with $\inst'$ an instance.
\end{observation}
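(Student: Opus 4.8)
The plan is to proceed by a routine case analysis over the rules of $\calc$, exploiting two structural features of the sequent $\inst \sar \emptyset$: its antecedent $\inst$ consists solely of atoms, and its consequent is empty. I would organize the logical rules of $\gt$ into those acting on the antecedent and those acting on the consequent, and dispatch each group uniformly.

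First I would rule out the left rules $\negl$, $\conl$, and $\existsl$: each requires a principal formula of non-atomic shape (a negation, conjunction, or existential, respectively) to occur in the antecedent of its conclusion, which is impossible since $\inst$ contains only atoms. Dually, the right rules $\negr$, $\conr$, and $\existsr$ each require a non-atomic principal formula in the consequent, which cannot occur because the consequent is empty. The initial rule $\id$ requires an atom $p(\vec{t})$ appearing simultaneously in antecedent and consequent; since the consequent is empty, no such atom exists, so $\id$ does not apply either. This leaves only the $\seru$ rules. I would then confirm that these are genuinely compatible with the shape of $\inst \sar \emptyset$: the conclusion of $\seru$ has the form $\ant, \body(\ru) \sar \con$, and matching it against $\inst \sar \emptyset$ merely demands that some instantiation of $\body(\ru)$ via an $\inst$-assignment be contained in $\inst$, which is consistent since both $\body(\ru)$ and $\inst$ are sets of atoms. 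Hence the $\seru$ rules are exactly the rules that can be bottom-up applied (noting that whether any particular one is applicable depends on $\inst$; the claim only excludes the other rule types).

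For the ``moreover'' clause I would inspect the premise of a bottom-up $\seru$ application. By the shape of the rule, it is $\inst, \alpha(\assign(\Vy), \Vz) \sar \emptyset$, where the adjoined head $\alpha(\assign(\Vy), \Vz)$ is, by the convention recorded just after the definition of $s(\rho)$, a \emph{set of atoms} with $\Vz$ fresh. Therefore the new antecedent $\inst' = \inst \cup \alpha(\assign(\Vy), \Vz)$ is again a set of atoms, i.e. an instance, while the consequent remains $\emptyset$, yielding a sequent $\inst' \sar \emptyset$ of the stated form.

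I expect no genuine obstacle here, as the argument is an inspection of the rule schemata. The only point demanding slight care is the handling of $\id$: being an axiom, its ``bottom-up application'' amounts to recognizing the sequent as an instance of the rule, and one must observe that the empty consequent precludes the required shared atom.
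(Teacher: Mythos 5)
Your case analysis is correct and is precisely the justification the paper has in mind: the observation is stated without proof as an immediate inspection of the rule schemata (empty consequent blocks $\id$ and all right rules, an atomic antecedent blocks all left logical rules, and the premise of $\seru$ only adjoins a set of atoms to the antecedent). Nothing is missing, including your careful note about how $\id$ is excluded.
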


\begin{observation}\label{obs:A}
The inference shown below left is a correct application of $\seru$ \iffi the inference shown below right is:
\begin{center}
\small
\AxiomC{$\ant' \sar \emptyset$}
\RightLabel{$\seru$}
\UnaryInfC{$\ant \sar \emptyset$}
\DisplayProof
\hspace{5mm}
\AxiomC{$\ant' \sar \con$}
\RightLabel{$\seru$}
\UnaryInfC{$\ant \sar \con$}
\DisplayProof
\end{center}
\end{observation}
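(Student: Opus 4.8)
The plan is to exploit the fact that $\seru$ is a purely \emph{left} rule: it inspects and augments only the antecedent of a sequent while copying the consequent verbatim from conclusion to premise. Concretely, an application of $\seru$ for a rule $\rul = \forall \Vx\Vy\, \beta(\Vx,\Vy) \to \exists \Vz\, \alpha(\Vy,\Vz)$ is determined by (i) an instantiation of $\beta(\Vx,\Vy)$ that identifies it as a subset of the conclusion's antecedent, and (ii) the addition of the corresponding head $\alpha(\Vy,\Vz)$ --- with $\Vz$ the freshly introduced variables --- to obtain the premise's antecedent. First I would observe that both of these data are read off entirely from $\ant$ and $\ant'$: the left inference and the right inference share the same $\ant$ and the same $\ant'$, so the body-matching and head-addition conditions hold for the one exactly when they hold for the other, independently of what sits on the right of $\sar$.

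The only remaining component of ``correct application'' that could see the consequent is the freshness side condition on $\Vz$. Here I would argue that, whichever way the side condition is read, the two inferences stand or fall together. For the direction from the right inference to the left: if $\Vz$ is fresh with respect to the context $\ant,\con$ of the right inference, then \emph{a fortiori} $\Vz$ is fresh with respect to $\ant$ together with the empty consequent, so the left inference is correct. For the converse direction I would invoke the eigenvariable convention governing $\seru$ (inherited from $\existsl$): the variables $\Vz$ are genuinely fresh, i.e.\ newly chosen so as not to occur anywhere else, and in particular not in $\con$; hence freshness with respect to $\ant$ coincides with freshness with respect to $\ant,\con$, and the left inference being correct yields the right inference being correct.

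Putting these together gives the biconditional. The one delicate point --- and the step I expect to require the most care --- is precisely the interaction between the freshness condition and the consequent $\con$ in the forward direction: one must be explicit that the fresh variables $\Vz$ introduced by $\seru$ may be taken disjoint from the variables occurring in $\con$, so that enlarging the right-hand context from $\emptyset$ to $\con$ cannot break the eigenvariable condition. Once this is pinned down, everything else is a matter of noting that $\seru$ never touches the consequent, so the presence or absence of $\con$ is irrelevant to every other aspect of the rule's applicability.
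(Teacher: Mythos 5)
The paper states this as an unproved observation, so there is no official proof to compare against; your argument supplies exactly the justification the authors leave implicit, namely that $\seru$ reads and augments only the antecedent and copies the consequent verbatim, so the body-matching and head-addition conditions are insensitive to what stands right of $\sar$. Your handling of the freshness side condition --- that the eigenvariables $\Vz$ may be chosen disjoint from $\con$, so that enlarging the consequent from $\emptyset$ to $\con$ cannot violate it --- is the one point of substance, and you treat it correctly, indeed more explicitly than the paper, which relies on this reading tacitly when it later uses the observation to graft a consequent onto an $\rset$-derivation in the proof of Theorem~\ref{thm:simulation}.
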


\begin{observation}\label{obs:B}
Let $\inst$ and $\inst'$ be instances with $\trig = (\ru, \assign)$ a trigger on $\inst$. Then, $(\inst, \trig), (\inst', \emptyset)$ is a chase derivation \iffi the following is a correct application of $\seru$:
\begin{center}
\small
\AxiomC{$\inst' \sar \emptyset$}
\RightLabel{$\seru$}
\UnaryInfC{$\inst \sar \emptyset$}
\DisplayProof
\end{center}
\end{observation}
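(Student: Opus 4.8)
The plan is to prove the biconditional by unfolding both definitions and observing that the data specifying a single chase step and the data specifying a correct application of $\seru$ coincide. Concretely, write $\ru = \forall{\Vx\Vy}\,\beta(\Vx,\Vy) \to \exists{\Vz}\,\alpha(\Vy,\Vz)$. A correct application of $\seru$ with conclusion $\inst \sar \emptyset$ is obtained by choosing a context $\ant, \con$, a substitution $\sub$ instantiating the frontier and body variables $\Vx,\Vy$ with terms, and a tuple $\Vz$ of fresh variables, so that the conclusion is $\ant, \sub(\beta(\Vx,\Vy)) \sar \con$ and the premise is $\ant, \sub(\beta(\Vx,\Vy)), \alpha(\sub(\Vy),\Vz) \sar \con$. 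Matching the conclusion against $\inst \sar \emptyset$ forces $\con = \emptyset$ and $\ant \cup \sub(\beta(\Vx,\Vy)) = \inst$. Since the premise then equals $\inst \cup \alpha(\sub(\Vy),\Vz) \sar \emptyset$, it coincides with $\inst' \sar \emptyset$ exactly when $\inst' = \inst \cup \alpha(\sub(\Vy),\Vz)$, which is precisely $\trig(\inst)$ for the trigger $\trig = (\ru,\sub)$. Thus the two notions carry the same content, and the proof amounts to making this identification in both directions.

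For the forward direction I would assume $(\inst, \trig),(\inst',\emptyset)$ is a chase derivation, so $\trig = (\ru,\assign)$ is a trigger in $\inst$ (giving $\assign(\beta(\Vx,\Vy)) \subseteq \inst$) and $\inst' = \trig(\inst) = \inst \cup \alpha(\assign(\Vy),\Vz)$ with $\Vz$ fresh. I then instantiate $\seru$ by taking $\ant := \inst$, $\con := \emptyset$, the substitution $\assign$ for $\Vx,\Vy$, and the same fresh $\Vz$. Because $\assign(\beta(\Vx,\Vy)) \subseteq \inst$, the conclusion $\ant, \assign(\beta(\Vx,\Vy)) \sar \emptyset$ collapses (as sets) to $\inst \sar \emptyset$, and the premise is $\inst \cup \alpha(\assign(\Vy),\Vz) \sar \emptyset = \inst' \sar \emptyset$; the freshness of $\Vz$ supplied by the chase discharges the side condition of $\seru$, so the inference is a correct application.

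For the backward direction I would start from a correct application of $\seru$ with conclusion $\inst \sar \emptyset$ and premise $\inst' \sar \emptyset$, read off its witnessing instantiation $(\ant,\con,\sub,\Vz)$ as above, and set $\assign := \sub$, $\trig := (\ru,\assign)$. Since every variable of $\Vx,\Vy$ occurs in $\beta$ and $\sub(\beta(\Vx,\Vy)) \subseteq \inst$, each such variable is sent to a term of $\ter(\inst)$, so $\assign$ can be regarded as an $\inst$-assignment with $\assign(\body(\ru)) \subseteq \inst$; hence $\trig$ is a trigger in $\inst$. The premise identity then yields $\inst' = \inst \cup \alpha(\assign(\Vy),\Vz) = \trig(\inst)$ with $\Vz$ fresh, which is exactly the requirement for $(\inst,\trig),(\inst',\emptyset)$ to be a chase derivation.

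I expect the only real care to lie in the bookkeeping of the side conditions and the set-theoretic overlaps rather than in any substantive argument: namely, confirming that $\assign(\beta(\Vx,\Vy)) \subseteq \inst$ makes $\ant \cup \assign(\beta(\Vx,\Vy))$ genuinely equal to $\inst$ under the set convention, that the substitution instantiating $\Vx,\Vy$ legitimately qualifies as an $\inst$-assignment (extending it arbitrarily on variables outside $\body(\ru)$ if desired), and that the notion of ``fresh $\Vz$'' in the chase (fresh with respect to $\inst$) aligns with the freshness side condition of $\seru$ (fresh with respect to the surrounding context $\ant = \inst$, $\con = \emptyset$). Since $\seru$ acts solely on the antecedent—as already noted in the reasoning behind Observation~\ref{lem:con-ex-permutabove-ER}—no interaction with the empty consequent arises, and the correspondence is essentially definitional.
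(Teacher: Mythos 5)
Your proof is correct and is exactly the definitional unfolding the paper intends: the paper states this as an Observation with no explicit proof, treating the correspondence between a trigger application and an instance of $\seru$ as immediate from the two definitions. Your careful bookkeeping of the set convention, the extension of the body substitution to an $\inst$-assignment, and the alignment of the two freshness conditions fills in precisely what the paper leaves implicit.
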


\begin{lemma}\label{lem:mutual-transformation} 
For every rule set $\rset$, $n \in \mathbb{N}$, and instances $\inst_1, \ldots, \inst_n$ there exists a chase derivation $(\inst_i, \trig_i)_{i \in [n - 1]}$ \iffi 
there exists an $\rset$-derivation of $\inst_{1} \sar \emptyset$ from $\inst_{n} \sar \emptyset$.
\end{lemma}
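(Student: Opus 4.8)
The plan is to establish the biconditional by induction on $n$, with the single-step equivalence of Observation~\ref{obs:B} serving as the workhorse and Observation~\ref{obs:X} guaranteeing that $\rset$-derivations have the right shape. Throughout, I will use that a chase derivation $(\inst_i, \trig_i)_{i \in [n-1]}$ unfolds as the sequence $(\inst_1, \trig_1), \ldots, (\inst_{n-1}, \trig_{n-1}), (\inst_n, \emptyset)$, and that---since $\seru$ is the only rule admitted in an $\rset$-derivation and each such rule is unary---every $\rset$-derivation is a linear stack of sequents.

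For the base case $n = 1$, the chase derivation is the degenerate $(\inst_1, \emptyset)$ and the $\rset$-derivation is the zero-step derivation of $\inst_1 \sar \emptyset$ from itself; both always exist, so the equivalence holds vacuously.

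For the inductive step I assume the statement for $n$ and prove it for $n+1$. In the forward direction, given a chase derivation $(\inst_i, \trig_i)_{i \in [n]}$, I split off its final pair $(\inst_n, \trig_n),(\inst_{n+1}, \emptyset)$, which is itself a one-step chase derivation, leaving the prefix $(\inst_i, \trig_i)_{i \in [n-1]}$ to which the induction hypothesis supplies an $\rset$-derivation of $\inst_1 \sar \emptyset$ from $\inst_n \sar \emptyset$. Observation~\ref{obs:B} converts the final pair into a correct $\seru$ application with premise $\inst_{n+1} \sar \emptyset$ and conclusion $\inst_n \sar \emptyset$; stacking this application atop the derivation obtained from the hypothesis yields the desired $\rset$-derivation of $\inst_1 \sar \emptyset$ from $\inst_{n+1} \sar \emptyset$. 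The backward direction reverses this reasoning: by Observation~\ref{obs:X} the given $\rset$-derivation consists solely of $\seru$ applications and every sequent occurring in it has the form $\inst \sar \emptyset$ for an instance $\inst$, so it is a linear stack whose antecedents are exactly $\inst_{n+1}, \inst_n, \ldots, \inst_1$ read top to bottom. I peel off the topmost $\seru$ application, which goes from $\inst_{n+1} \sar \emptyset$ to $\inst_n \sar \emptyset$; Observation~\ref{obs:B} recovers a trigger $\trig_n$ on $\inst_n$ with $\trig_n(\inst_n) = \inst_{n+1}$, while the remaining derivation of $\inst_1 \sar \emptyset$ from $\inst_n \sar \emptyset$ yields, via the induction hypothesis, a chase derivation $(\inst_i, \trig_i)_{i \in [n-1]}$; concatenation produces $(\inst_i, \trig_i)_{i \in [n]}$.

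The only delicate point is bookkeeping rather than genuine mathematical difficulty: I must ensure that the intermediate objects line up so the induction closes. This is precisely where Observation~\ref{obs:X} is essential, since it forces each sequent in an $\rset$-derivation to be an instance-sequent $\inst \sar \emptyset$, so the derivation's linear spine of antecedents coincides with the sequence of chase instances, while the freshness side-condition on $\Vz$ in $\seru$ matches the requirement that trigger applications introduce fresh variables (both folded into Observation~\ref{obs:B}). I expect the main obstacle, such as it is, to be phrasing the peeling-off of the first and last step cleanly enough that the chase-derivation indexing and the top-to-bottom ordering of the $\rset$-derivation stay synchronized.
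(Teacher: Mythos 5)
Your proposal is correct and follows essentially the same route as the paper: both reduce the lemma to the single-step equivalence of Observation~\ref{obs:B}, applied once per chase step / $\seru$ application, with Observation~\ref{obs:X} guaranteeing the linear instance-sequent shape of $\rset$-derivations. The only difference is presentational—you make the step-by-step transformation explicit as an induction on $n$, whereas the paper simply invokes Observation~\ref{obs:B} for the whole transformation in both directions.
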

%


In the proof of the following theorem, one shows that every chase derivation can be transformed into a proof in $\calc$ and vice-versa, showing how existential rule reasoning and proofs in $\calc$ simulate one another.

\begin{theorem}\label{thm:simulation} Let $\rset$ be a rule set. A chase derivation $(\inst_i, \trig_i)_{i \in [n]}$ witnessing $(\db, \rset) \models \exists \vec{x} \query(\vec{x})$ exists \iffi a proof in $\calc$ of $\db \sar \exists \vec{x} \query(\vec{x})$ exists.
\end{theorem}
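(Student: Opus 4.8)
The plan is to prove both directions through a common normal form for proofs of $\db \sar \exists\vec{x}\query(\vec{x})$: a bottom segment consisting solely of $\seru$ applications (encoding the chase) capped by a top segment built from $\existsr$, $\conr$, and $\id$ (encoding the witnessing assignment). The permutation observations and the mutual-transformation lemma carry most of the weight.

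For the forward direction, suppose a chase derivation $(\inst_i,\trig_i)_{i\in[n]}$ witnesses $(\db,\rset)\models\exists\vec{x}\query(\vec{x})$, so that $\inst_1=\db$, only $\rset$-rules are applied, and some $\inst_{n+1}$-assignment $\assign$ satisfies $\assign(\query(\vec{x}))\subseteq\inst_{n+1}$. Applying Lemma~\ref{lem:mutual-transformation} to the instance sequence $\inst_1,\dots,\inst_{n+1}$ yields an $\rset$-derivation of $\db\sar\emptyset$ from $\inst_{n+1}\sar\emptyset$, and by repeated use of Observation~\ref{obs:A} I replace every empty consequent along this derivation by $\exists\vec{x}\query(\vec{x})$, obtaining an $\rset$-derivation of $\db\sar\exists\vec{x}\query(\vec{x})$ from $\inst_{n+1}\sar\exists\vec{x}\query(\vec{x})$. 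It then remains to prove the top sequent: I instantiate the quantifier prefix with the terms $\assign(\vec{x})\in\ter(\inst_{n+1})$ using $\existsr$, split the resulting conjunction $\query(\assign(\vec{x}))$ into its atomic conjuncts with $\conr$, and close each leaf with $\id$, which is possible precisely because $\assign(\query(\vec{x}))\subseteq\inst_{n+1}$. Stacking this sub-proof on top of the $\rset$-derivation produces a proof of $\db\sar\exists\vec{x}\query(\vec{x})$ in $\calc$.

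For the converse, I start from an arbitrary proof of $\db\sar\exists\vec{x}\query(\vec{x})$. Using Observation~\ref{lem:con-ex-permutabove-ER} I repeatedly permute every $\conr$ and $\existsr$ application above the $\seru$ applications, normalizing the proof so that all $\seru$ rules sit at the bottom. Since $\seru$ is unary and $\id$ cannot close a sequent whose consequent is the single (generally non-atomic) formula $\exists\vec{x}\query(\vec{x})$, this bottom segment is a linear chain of $\seru$ applications running from $\db\sar\exists\vec{x}\query(\vec{x})$ up to some $\inst\sar\exists\vec{x}\query(\vec{x})$ with $\inst$ an instance. By Observation~\ref{obs:A} this chain is also an $\rset$-derivation of $\db\sar\emptyset$ from $\inst\sar\emptyset$, whence Lemma~\ref{lem:mutual-transformation} supplies a chase derivation from $\db$ to $\inst$ that applies only $\rset$-rules. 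The top segment is a $\gt$-proof of $\inst\sar\exists\vec{x}\query(\vec{x})$ using only $\conr$, $\existsr$, and $\id$; by soundness (Theorem~\ref{thm:sound-comp}) the interpretation $\mfy{\inst}$, which satisfies every atom of $\inst$, must satisfy $\exists\vec{x}\query(\vec{x})$, yielding terms $\vec{t}\in\ter(\inst)$ with $\query(\vec{t})\subseteq\inst$, i.e. an $\inst$-assignment witnessing the query. The chase derivation from $\db$ to $\inst$ together with this assignment witnesses $(\db,\rset)\models\exists\vec{x}\query(\vec{x})$.

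The main obstacle is the converse direction, specifically the normalization step: I must argue that permuting $\conr$ and $\existsr$ above all $\seru$ applications terminates and genuinely isolates a single linear $\seru$-chain at the root, taking care that the branching introduced by $\conr$ does not obstruct the permutation (this is exactly what Observation~\ref{lem:con-ex-permutabove-ER} guarantees locally, but termination of the global rewriting still needs a measure, e.g. the number of $\seru$-below-$\conr$/$\existsr$ pairs). A secondary subtlety is the extraction of the query-witnessing assignment from the top segment: one cannot simply read off a single instantiation, since distinct $\existsr$-instantiations may be interleaved across $\conr$-branches, so I rely on the semantic shortcut through soundness rather than a direct combinatorial analysis of the branching.
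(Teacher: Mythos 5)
Your proposal is correct and follows essentially the same route as the paper's proof: both directions hinge on Lemma~\ref{lem:mutual-transformation}, Observation~\ref{obs:A}, and the permutation of $\conr$/$\existsr$ above $\seru$ from Observation~\ref{lem:con-ex-permutabove-ER}, with soundness used to extract the witnessing assignment in the converse direction. The only cosmetic difference is that in the forward direction you construct the top proof of $\inst_{n+1} \sar \exists \vec{x} \query(\vec{x})$ explicitly via $\existsr$, $\conr$, and $\id$, whereas the paper simply invokes completeness (Theorem~\ref{thm:sound-comp}).
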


 Leveraging Theorems~\ref{thm:correctness-proof-search} and~\ref{thm:simulation}, it is straightforward to prove the first claim of the theorem below. The second claim is immediate as $\inst$ and $\ch_{\infty}(\db,\rset)$ are universal models. We note that the following theorem expresses a correspondence between proof-search and the chase.

\begin{theorem}\label{thm:chase-prove-equiv}
Let $\rset$ be a rule set, $\db$ be a database, and $\exists \vec{x} q(\vec{x})$ be a BCQ. Then,
\begin{enumerate}

\item $\prove(\db \sar \exists \vec{x} q(\vec{x})) = \true$ \iffi there is an $n \in \mathbb{N}$ such that $\ch_{n}(\db,\rset) \models \exists \vec{x} q(\vec{x})$ \iffi $\ch_{\infty}(\db,\rset) \models \exists \vec{x} q(\vec{x})$;

\item If $\prove(\db \sar \exists \vec{x} q(\vec{x})) \neq \true$ , then $\inst \equiv \ch_{\infty}(\db,\rset)$ with $\inst$ the witnessing counter-model.

\end{enumerate}
\end{theorem}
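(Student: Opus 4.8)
The plan is to obtain both claims as semantic corollaries of the syntactic correspondences already established, so very little new work is needed beyond careful chaining. For Part~1, I would first collapse the two halves of Theorem~\ref{thm:correctness-proof-search} into a single biconditional: part~(1) gives $\prove(\db \sar \exists\vec{x}\query(\vec{x})) = \true \Rightarrow (\db,\rset) \models \exists\vec{x}\query(\vec{x})$ via the constructed proof, while part~(2) gives the contrapositive $(\db,\rset) \models \exists\vec{x}\query(\vec{x}) \Rightarrow \prove(\db \sar \exists\vec{x}\query(\vec{x})) = \true$, since any failure of $\prove$ yields a model refuting the entailment. Thus $\prove(\db \sar \exists\vec{x}\query(\vec{x})) = \true$ \iffi $(\db,\rset) \models \exists\vec{x}\query(\vec{x})$, equivalently (by Theorem~\ref{thm:sound-comp}) \iffi a proof of $\db \sar \exists\vec{x}\query(\vec{x})$ exists in $\calc$, equivalently (by Theorem~\ref{thm:simulation}) \iffi a chase derivation witnessing $(\db,\rset) \models \exists\vec{x}\query(\vec{x})$ exists.

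It then remains to link a witnessing chase derivation with the two chase-level statements. For ``$\ch_\infty(\db,\rset) \models \exists\vec{x}\query(\vec{x})$ \iffi there is $n$ with $\ch_n(\db,\rset) \models \exists\vec{x}\query(\vec{x})$'', I would use that the levels $\ch_n(\db,\rset)$ form an increasing chain whose union underlies $\ch_\infty(\db,\rset)$, together with the facts that query satisfaction is monotone under adding atoms and that $\query(\vec{x})$ is a finite conjunction: a match of $\query$ into $\ch_\infty(\db,\rset)$ uses only finitely many atoms, all of which already appear in some $\ch_n(\db,\rset)$, and conversely any match into $\ch_n(\db,\rset)$ persists into $\ch_\infty(\db,\rset)$. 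To connect this with the witnessing derivation, I would show by induction on derivation length that the final instance of any chase derivation from $\db$ homomorphically maps into some finite level $\ch_m(\db,\rset)$ (each trigger application raises the target level by at most one), so that a match of $\query$ into $\inst_{n+1}$ transfers along the homomorphism into $\ch_m(\db,\rset)$; conversely, any level $\ch_n(\db,\rset)$ can be reproduced, up to renaming of fresh terms, by a single long chase derivation, so a match of $\query$ into $\ch_n(\db,\rset)$ induces a witnessing derivation. Chaining these biconditionals yields Part~1.

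The main obstacle is precisely this serialization step: reconciling the one-trigger-at-a-time definition of a chase derivation with the all-triggers-at-once definition of the one-step chase, while carefully tracking the fresh variables $\vec{z}$ introduced at each application (and, for the restricted variant, accounting for triggers whose heads are already satisfied, in which case the homomorphism is extended by sending the new nulls to the existing witnesses). I expect this to be routine but notationally delicate; everything else is bookkeeping over the cited results.

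For Part~2, I would argue entirely at the level of universal models. When $\prove(\db \sar \exists\vec{x}\query(\vec{x})) \neq \true$, Theorem~\ref{thm:correctness-proof-search}(2) furnishes the witnessing counter-model $\inst$, which is a universal model of $(\db,\rset)$, and $\ch_\infty(\db,\rset)$ is a universal model of $(\db,\rset)$ by definition. Since each is itself a model of $(\db,\rset)$, the universality of one applies to the other: universality of $\inst$ yields a homomorphism $\inst \to \ch_\infty(\db,\rset)$, and universality of $\ch_\infty(\db,\rset)$ yields a homomorphism $\ch_\infty(\db,\rset) \to \inst$, whence $\inst \equiv \ch_\infty(\db,\rset)$ by the definition of homomorphic equivalence. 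The only point requiring care is confirming, from its construction in Theorem~\ref{thm:correctness-proof-search}, that $\inst$ is genuinely a model of $(\db,\rset)$ (so that the second homomorphism exists); this holds because $\inst$ is read off a saturated unprovable sequent, and saturation with respect to the rule and initial conditions forces $\inst$ to satisfy $\rset$ while containing $\db$.
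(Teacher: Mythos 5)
Your proposal is correct and follows essentially the same route as the paper: both parts chain Theorems~\ref{thm:correctness-proof-search} and~\ref{thm:simulation} with the linearization of chase levels into chase derivations (and the embedding of a derivation's final instance into some $\ch_n(\db,\rset)$), and Part~2 is exactly the paper's argument that two universal models of the same knowledge base are homomorphically equivalent. You are somewhat more careful than the paper about the serialization step and the renaming of fresh variables, but this is an elaboration of the same proof rather than a different one.
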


\begin{example}\label{ex:example-proof-chase} We provide an example demonstrating the relationship between a proof and the chase. We read $\mathtt{F}(x)$ as `$x$ is female', $\mathtt{M}(x,y)$ as `$x$ is the mother of $y$' and $\mathtt{A}(x,y)$ as `$x$ is the ancestor of $y$'. We let $\kb = (\db, \rset)$ be a knowledge base such that $\db = \{\mathtt{M}(b,a),\mathtt{M}(c,b)\}$, $\rset = \{\ru_{1}, \ru_{2}\}$, and
\begin{flushleft}
$\rho_{1} = \forall xy (\mathtt{M}(x,y) \rightarrow \mathtt{A}(x,y) \land \mathtt{F}(x))$;\\
$\rho_{2} = \forall xy (\mathtt{A}(x,y) \land \mathtt{A}(y,z) \rightarrow \mathtt{A}(x,z))$.
\end{flushleft}
 In \fig~\ref{fig:simulation-example}, $\kb \models \exists x(\mathtt{A}(x,a) \land \mathtt{F}(x))$ is witnessed and verified by the proof shown left. The graph shown right demonstrates that the BCQ $\exists x(\mathtt{A}(x,a) \land \mathtt{F}(x))$ (to the right) can be mapped into the chase $\ch_{\infty}(\db,\rset)$ (to the left) via a $\ch_{\infty}(\db,\rset)$-assignment $\assign$ (depicted as dotted arrows). (NB. We have omitted the points $\{\top(c) \ | \ c \in \const\}$ in the picture of $\ch_{\infty}(\db,\rset)$ for simplicity.)
\end{example}

\section{Concluding Remarks}\label{sec:conclusion}


 We have formally established an equivalence between existential rule reasoning and sequent calculus proofs, effectively showing that proof-search simulates the chase. This work is meaningful as it uncovers and connects two central reasoning tasks and tools in the domain of existential rules and proof theory. Moreover, we have found that the counter-models extracted from failed proof-search are universal, implying their homomorphic equivalence to the chase---a previously unrecognized observation. 
 
 For future work, we aim to examine the relationship between the \emph{disjunctive chase}~\cite{BouManMorPie16} and proof-search in sequent calculi with disjunctive inference rules. It may additionally be worthwhile to investigate if our sequent systems can be adapted to facilitate reasoning with non-classical variants or extensions of existential rules. For example, we could merge our sequent calculi with those of~\cite{LyoGom22} for \emph{standpoint logic}---a modal logic used in knowledge integration to reason with diverse and potentially conflicting knowledge sources~\cite{GomRud21}. Finally, as this paper presents a sequent calculus for querying with existential rules, we plan to further explore its utility; e.g. by identifying admissible rules or applying loop checking techniques to uncover new classes of existential rules with decidable query entailment.
 

\section*{Acknowledgments}

Work supported by the European Research Council (ERC) Consolidator Grant 771779 
 (DeciGUT).

\bibliographystyle{kr}
\bibliography{bibliography}

\begin{thebibliography}{}

\bibitem[\protect\citeauthoryear{Abiteboul, Hull, and
  Vianu}{1995}]{AbiteboulHV95}
Abiteboul, S.; Hull, R.; and Vianu, V.
\newblock 1995.
\newblock {\em Foundations of Databases}.
\newblock Addison-Wesley.

\bibitem[\protect\citeauthoryear{Baget \bgroup et al\mbox.\egroup
  }{2009}]{BagLecMugSal09}
Baget, J.-F.; Lecl{\`{e}}re, M.; Mugnier, M.-L.; and Salvat, E.
\newblock 2009.
\newblock Extending decidable cases for rules with existential variables.
\newblock In Boutilier, C., ed., {\em Proceedings of the 21st International
  Joint Conference on Artificial Intelligence (IJCAI'09)},  677--682.
\newblock Morgan Kaufmann.

\bibitem[\protect\citeauthoryear{Baget \bgroup et al\mbox.\egroup
  }{2011}]{BagLecMugSal11}
Baget, J.-F.; Leclère, M.; Mugnier, M.-L.; and Salvat, E.
\newblock 2011.
\newblock On rules with existential variables: Walking the decidability line.
\newblock {\em Artificial Intelligence} 175(9):1620--1654.

\bibitem[\protect\citeauthoryear{Beeri and Vardi}{1984}]{BeeVar84}
Beeri, C., and Vardi, M.~Y.
\newblock 1984.
\newblock A proof procedure for data dependencies.
\newblock {\em Journal of the {ACM}} 31(4):718--741.

\bibitem[\protect\citeauthoryear{Bourhis \bgroup et al\mbox.\egroup
  }{2016}]{BouManMorPie16}
Bourhis, P.; Manna, M.; Morak, M.; and Pieris, A.
\newblock 2016.
\newblock Guarded-based disjunctive tuple-generating dependencies.
\newblock {\em ACM Trans. Database Syst.} 41(4).

\bibitem[\protect\citeauthoryear{Deutsch, Nash, and Remmel}{2008}]{DeuNasRem08}
Deutsch, A.; Nash, A.; and Remmel, J.~B.
\newblock 2008.
\newblock The chase revisited.
\newblock In Lenzerini, M., and Lembo, D., eds., {\em Proceedings of the 27th
  {ACM} {SIGMOD-SIGACT-SIGART} Symposium on Principles of Database Systems
  (PODS'08)},  149--158.
\newblock {ACM}.

\bibitem[\protect\citeauthoryear{Dyckhoff}{1992}]{Dyc92}
Dyckhoff, R.
\newblock 1992.
\newblock Contraction-free sequent calculi for intuitionistic logic.
\newblock {\em The Journal of Symbolic Logic} 57(3):795--807.

\bibitem[\protect\citeauthoryear{Fagin \bgroup et al\mbox.\egroup
  }{2005}]{FagKolMilPop05}
Fagin, R.; Kolaitis, P.~G.; Miller, R.~J.; and Popa, L.
\newblock 2005.
\newblock Data exchange: semantics and query answering.
\newblock {\em Theoretical Computer Science} 336(1):89--124.
\newblock Database Theory.

\bibitem[\protect\citeauthoryear{Feller \bgroup et al\mbox.\egroup
  }{2023}]{FelLyoOstRud23}
Feller, T.; Lyon, T.~S.; Ostropolski-Nalewaja, P.; and Rudolph, S.
\newblock 2023.
\newblock Finite-cliquewidth sets of existential rules: Toward a general
  criterion for decidable yet highly expressive querying.
\newblock In {\em Proceedings of the 26th International Conference on Database
  Theory (ICDT 2023)}, LIPIcs.
\newblock Schloss Dagstuhl - Leibniz-Zentrum f{\"{u}}r Informatik.
\newblock To appear. Preprint available via https://arxiv.org/abs/2209.02464.

\bibitem[\protect\citeauthoryear{Gentzen}{1935a}]{Gen35a}
Gentzen, G.
\newblock 1935a.
\newblock Untersuchungen {\"u}ber das logische schlie{\ss}en. i.
\newblock {\em Mathematische zeitschrift} 39(1):176--210.

\bibitem[\protect\citeauthoryear{Gentzen}{1935b}]{Gen35b}
Gentzen, G.
\newblock 1935b.
\newblock Untersuchungen {\"u}ber das logische schlie{\ss}en. ii.
\newblock {\em Mathematische Zeitschrift} 39(1):405--431.

\bibitem[\protect\citeauthoryear{G\'{o}mez~\'{A}lvarez and
  Rudolph}{2021}]{GomRud21}
G\'{o}mez~\'{A}lvarez, L., and Rudolph, S.
\newblock 2021.
\newblock Standpoint logic: Multi-perspective knowledge representation.
\newblock In Neuhaus, F., and Brodaric, B., eds., {\em Proceedings of the 12th
  International Conference on Formal Ontology in Information Systems}, volume
  344 of {\em FAIA},  3--17.
\newblock IOS Press.

\bibitem[\protect\citeauthoryear{Gottlob}{2009}]{Gottlob09}
Gottlob, G.
\newblock 2009.
\newblock Datalog+/-: {A} unified approach to ontologies and integrity
  constraints.
\newblock In Antonellis, V.~D.; Castano, S.; Catania, B.; and Guerrini, G.,
  eds., {\em Proceedings of the 17th Italian Symposium on Advanced Database
  Systems, (SEBD'09)},  5--6.
\newblock Edizioni Seneca.

\bibitem[\protect\citeauthoryear{Kleene}{1952}]{Kle52}
Kleene, S.~C.
\newblock 1952.
\newblock {\em Introduction to Metamathematics}.
\newblock American Elsevier Publishing Company, INC. - New York.

\bibitem[\protect\citeauthoryear{Kripke}{1959}]{Kri59}
Kripke, S.~A.
\newblock 1959.
\newblock A completeness theorem in modal logic.
\newblock {\em Journal of Symbolic Logic} 24(1):1–14.

\bibitem[\protect\citeauthoryear{Lyon and Gómez~Álvarez}{2022}]{LyoGom22}
Lyon, T.~S., and Gómez~Álvarez, L.
\newblock 2022.
\newblock {Automating Reasoning with Standpoint Logic via Nested Sequents}.
\newblock In {\em {Proceedings of the 19th International Conference on
  Principles of Knowledge Representation and Reasoning}},  257--266.

\bibitem[\protect\citeauthoryear{Lyon and Karge}{2022}]{LyoKar22}
Lyon, T., and Karge, J.
\newblock 2022.
\newblock Uniform and modular sequent systems for description logics.
\newblock In {\em Proceedings of the 35th International Workshop on Description
  Logics {(DL} 2022) co-located with Federated Logic Conference (FLoC 2022),
  Haifa, Israel, August 7th to 10th, 2022}.

\bibitem[\protect\citeauthoryear{Lyon \bgroup et al\mbox.\egroup
  }{2020}]{LyoTiuGorClo20}
Lyon, T.; Tiu, A.; Gor{\'{e}}, R.; and Clouston, R.
\newblock 2020.
\newblock Syntactic interpolation for tense logics and bi-intuitionistic logic
  via nested sequents.
\newblock In Fern{\'{a}}ndez, M., and Muscholl, A., eds., {\em 28th {EACSL}
  Annual Conference on Computer Science Logic, {CSL} 2020, January 13-16, 2020,
  Barcelona, Spain}, volume 152 of {\em LIPIcs},  28:1--28:16.
\newblock Schloss Dagstuhl - Leibniz-Zentrum f{\"{u}}r Informatik.

\bibitem[\protect\citeauthoryear{Maehara}{1960}]{Mae60}
Maehara, S.
\newblock 1960.
\newblock On the interpolation theorem of craig.
\newblock {\em S{\^u}gaku} 12(4):235--237.

\bibitem[\protect\citeauthoryear{Salvat and Mugnier}{1996}]{SalMug96}
Salvat, E., and Mugnier, M.-L.
\newblock 1996.
\newblock Sound and complete forward and backward chainings of graph rules.
\newblock In Eklund, P.~W.; Ellis, G.; and Mann, G., eds., {\em Proceedings of
  the 4th International Conference on Conceptual Structures (ICCS'96)}, volume
  1115 of {\em LNCS},  248--262.
\newblock Springer.

\bibitem[\protect\citeauthoryear{Simpson}{1994}]{Sim94}
Simpson, A.~K.
\newblock 1994.
\newblock {\em The proof theory and semantics of intuitionistic modal logic}.
\newblock Ph.D. Dissertation, University of Edinburgh. College of Science and
  Engineering. School of Informatics.

\bibitem[\protect\citeauthoryear{Slaney}{1997}]{Sla97}
Slaney, J.~K.
\newblock 1997.
\newblock Minlog: {A} minimal logic theorem prover.
\newblock In McCune, W., ed., {\em Automated Deduction - CADE-14, 14th
  International Conference on Automated Deduction, Townsville, North
  Queensland, Australia, July 13-17, 1997, Proceedings}, volume 1249 of {\em
  Lecture Notes in Computer Science},  268--271.
\newblock Springer.

\end{thebibliography}

\appendix

\section{Proofs for \sect~\ref{sec:sequents}}

\begin{lemma}[Soundness]\label{lem:sound}
If there exists a proof of $\ant \sar \con$ in $\calc$, then $\fint(\ant \sar \con)$ is $\rset$-valid.
\end{lemma}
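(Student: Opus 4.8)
The plan is to argue by induction on the height of the given proof of $\ant \sar \con$ in $\calc$, establishing the stronger rule-local statement that each inference preserves $\rset$-validity when read bottom-up: if the formula interpretation $\fint$ of every premise is $\rset$-valid, then so is $\fint$ of the conclusion. Unwinding the definitions, ``$\fint(\ant \sar \con)$ is $\rset$-valid'' means that for every interpretation $\inst$ with $\inst \models \rset$ and every $\inst$-assignment $\assign$, if $\inst, \assign \models \bigwedge \ant$ then $\inst, \assign \models \bigvee \con$. So throughout the inductive step I will fix such an $\inst$ and $\assign$, assume satisfaction of the conclusion's antecedent conjunction, and derive satisfaction of its consequent disjunction using the induction hypothesis on the premise(s).

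For the base case, an application of $\id$ has conclusion $\ant, p(\vec{t}) \sar p(\vec{t}), \con$, whose formula interpretation $\bigwedge(\ant \cup \{p(\vec{t})\}) \to (p(\vec{t}) \lor \bigvee \con)$ is valid outright, since the atom $p(\vec{t})$ occurs both in the antecedent and as a consequent disjunct; hence it is trivially $\rset$-valid. The propositional cases $\negl$, $\negr$, $\conl$, $\conr$ and the case $\existsr$ are then routine unfoldings of the satisfaction clauses (2)--(4), with no appeal to $\rset$.

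Two cases require the standard eigenvariable/freshness handling. For $\existsl$, with premise $\ant, \phi(y/x) \sar \con$ and $y$ fresh, from $\inst,\assign \models \exists x \phi$ I would extract a witness $t \in \ter(\inst)$ and set $\assign' = \assign[t/y]$; because $y$ does not occur in $\ant$ or $\con$, the assignment $\assign'$ agrees with $\assign$ there, while $\inst, \assign' \models \phi(y/x)$, so the induction hypothesis on the premise yields $\inst, \assign' \models \bigvee \con$ and thus $\inst, \assign \models \bigvee \con$.

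The main obstacle, and the only place where $\inst \models \rset$ is genuinely used, is the rule $\seru$ for an existential rule $\rho = \forall \vec{x}\vec{y}\,\beta(\vec{x},\vec{y}) \to \exists \vec{z}\,\alpha(\vec{y},\vec{z})$, whose premise is $\ant, \beta(\vec{x},\vec{y}), \alpha(\vec{y},\vec{z}) \sar \con$ with $\vec{z}$ fresh. Assuming $\inst,\assign \models \bigwedge \ant$ and $\inst,\assign \models \beta(\vec{x},\vec{y})$, the fact that $\inst \models \rho$ supplies terms $\vec{t} \in \ter(\inst)$ with $\inst, \assign[\vec{t}/\vec{z}] \models \alpha(\vec{y},\vec{z})$. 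Setting $\assign' = \assign[\vec{t}/\vec{z}]$ and using that $\vec{z}$ is fresh (so $\assign'$ agrees with $\assign$ on $\ant$, on $\beta$, and on $\con$), I obtain that $\inst,\assign'$ satisfies the antecedent conjunction of the premise; the induction hypothesis gives $\inst,\assign' \models \bigvee \con$, and freshness of $\vec{z}$ transfers this back to $\inst,\assign \models \bigvee \con$. This is the crux: the bottom-up soundness of $\seru$ is exactly what encodes the semantics of $\rho$, and the freshness side condition is precisely what makes the substitution of the rule's witnesses harmless to the surrounding context.
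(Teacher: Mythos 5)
Your proposal is correct and follows essentially the same route as the paper: induction on the structure of the proof, with the $\id$ base case, routine propositional cases, and the eigenvariable/freshness argument carrying the $\existsl$ and $\seru$ cases, the latter being the only place $\inst \models \rset$ is invoked. The only cosmetic difference is that the paper phrases the inductive step contrapositively (a falsifier of the conclusion yields a falsifier of the premise) while you argue directly, which is logically equivalent.
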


\begin{proof} We prove the result by induction on the number of inferences in the proof of $\ant \sar \con$.

\textit{Base case.} If $\ant \sar \con$ is an instance of $\id$, then $p(\vec{t}) \in \ant \cap \con$. Let $\ant = \ant', p(\vec{t})$ and $\Delta = \Delta', p(\vec{t})$. Suppose $\inst$ is an interpretation and $\assign$ is an $\inst$-assignment such that $\inst, \assign \models \bigwedge \Gamma' \land p(\vec{t})$. Then, $\assign(p(\vec{t})) \in \inst$, which shows that $\inst, \assign \models \bigvee \Delta' \lor p(\vec{t})$. Hence, $\inst \models \bigwedge \Gamma \rightarrow \bigvee \Delta$ for any interpretation $\inst$, including any interpretation $\inst$ such that $\inst \models \rset$. This shows that $\fint(\ant \sar \con)$ is $\rset$-valid.

\textit{Inductive step.} We consider the final inference in the given proof and show that if the conclusion is not $\rset$-valid, then the premise is not $\rset$-valid, establishing that the conclusion is $\rset$-valid as the premise is $\rset$-valid by the inductive hypothesis. We argue the $\existsl$, $\existsr$, and $\seru$ cases as the remaining cases are straightforward.

$\existsl$. Suppose an interpretation $\inst$ and $\inst$-assignment $\assign$ exist such that $\inst \models \rset$ and $\inst, \assign \not\models \fint(\ant, \exists x \phi \sar \con)$. Then, there exists a term $t \in \ter(\inst)$ such that $\inst, \assign[t/x] \models \phi$. It follows that $\inst, \assign[t/y] \models \phi(y/x)$, showing that $\inst, \assign[t/y] \not\models \fint(\ant, \phi(y/x) \sar \con)$, which concludes the proof of the case.

$\existsr$. Suppose an interpretation $\inst$ and $\inst$-assignment $\assign$ exist such that $\inst \models \rset$ and $\inst, \assign \not\models \fint(\ant \sar \exists x \phi, \con)$. It follows that $\inst, \assign \not\models \exists x \phi$, meaning that for every term $t \in \ter(\inst)$, $\inst, \assign \not\models \phi(t/x)$. As $\inst$ is an interpretation, $\ter(\inst) = \ter$, showing that $\inst, \assign \not\models \phi(t/x)$. This implies that $\inst, \assign \not\models \fint(\ant \sar \exists x \phi, \phi(t/x), \con)$, i.e. the premise is not $\rset$-valid.

$\seru$. Let $\rho = \forall \vec{x} \vec{y} \beta(\vec{x},\vec{y}) \rightarrow \exists \vec{z} \alpha(\vec{y},\vec{z}) \in \rset$. Suppose an interpretation $\inst$ and $\inst$-assignment $\assign$ exist such that $\inst \models \rset$ and $\inst, \assign \not\models \fint(\ant, \beta(\vec{x},\vec{y}) \sar \con)$. It follows that we have $\inst, \assign \models \beta(\vec{x},\vec{y})$, implying $\inst, \assign \models \exists \vec{z} \alpha(\vec{y},\vec{z})$ as $\inst \models \rset$. Therefore, there exist $\vec{t} \in \ter(\inst)$ such that $\inst, \assign[\vec{t}/\vec{z}] \models \alpha(\vec{y},\vec{z})$, showing that $\inst, \assign \not\models \fint(\ant, \beta(\vec{x},\vec{y}), \alpha(\vec{y},\vec{z}) \sar \con)$, i.e. the premise is not $\rset$-valid.
\end{proof}

\begin{lemma}[Completeness]\label{lem:complete}
If $\fint(\ant \sar \con)$ is $\rset$-valid, there exists a proof of $\ant \sar \con$ in $\calc$.
\end{lemma}

\begin{proof} Similar to the proof-search procedure $\prove$, we take $\ant \sar \con$ as input and apply rules from $\calc$ bottom-up with the goal of constructing a proof thereof. We bottom-up apply the rules from $\calc$ in a roundabout fashion, applying $\id$, $\negl$, $\negr$, $\conl$, $\conr$, $\existsl$, $\existsr$, $\seru$, and then circling back to $\id$. When we consider a rule, we bottom-up apply it in all possible ways to the derivation being constructed. Furthermore, if a path in the derivation exists such that it is not an instance of $\id$ and none of the other rules are bottom-up applicable to the sequent at the top of the path, then we copy it above itself. This strategy of completeness is common in the proof theory literature; e.g. see~\cite{Kri59,LyoKar22}.

Let us suppose that no proof of $\ant \sar \con$ exists in $\calc$. Then, the above described process will not terminate, implying that an infinite derivation of $\ant \sar \con$ will be constructed, which has the shape of an infinite tree with finite branching. Therefore, by K\"onig's lemma an infinite path
$$
\mathcal{P} = (\Gamma_{0} \sar \Delta_{0}), (\Gamma_{1} \sar \Delta_{1}), \ldots, (\Gamma_{n} \sar \Delta_{n}), \ldots
$$
 of sequents exists such that $\Gamma_{0} = \ant$ and $\Delta_{0} = \con$. We define:
$$
\mathbf{\ant} = \bigcup_{i \in \mathbb{N}} \ant_{i}
\qquad
\mathbf{\con} = \bigcup_{i \in \mathbb{N}} \con_{i}
$$
 Using $\mathbf{\ant}$, we define an interpretation $\inst$ accordingly:
$$
\inst = \mfy{\{p(\vec{t}) \ | \ p(\vec{t}) \in \mathbf{\ant} \}}
$$
 It is straightforward to argue that $\inst \models \rset$ and is similar to the argument given in \thm~\ref{thm:correctness-proof-search} below. Let us now define an $\inst$-assignment $\assign$ such that $\assign(t) = t$ for each term $t \in \ter$. We now show by simultaneous induction on the number of logical connectives in $\phi$ that (1) if $\phi \in \mathbf{\Gamma}$, then $\inst, \assign \models \phi$, and (2) if $\phi \in \mathbf{\Delta}$, then $\inst, \assign \not\models \phi$. We only show the atomic, negation, and existential cases below for claim (1) as the remaining cases are simple or similar.

\begin{description}

\item[$p(\vec{t}) \in \mathbf{\Gamma}$.] If $p(\vec{t}) \in \mathbf{\Gamma}$, then $p(\vec{t}) \in \inst$ by definition, showing that $\assign(p(\vec{t})) \in \inst$, which implies that $\inst, \assign \models p(\vec{t})$.

\item[$\neg \psi \in \mathbf{\Gamma}$.] If $\neg \psi \in \mathbf{\Gamma}$, then $\psi \in \mathbf{\Delta}$ as the $\negl$ rule will eventually be bottom-up applied in the procedure described above, so by IH and claim (2), $\inst, \assign \not\models \psi$, which shows that $\inst, \assign \models \neg \psi$.

\item[$\exists x \psi \in \mathbf{\Gamma}$.] If $\exists x \psi \in \mathbf{\Gamma}$, then at some stage $\existsl$ will be applied bottom-up ensuring that $\psi(y/x) \in \mathbf{\Gamma}$. By IH, $\inst, \assign[y/y] \models \psi(y/x)$, and we know $y \in \ter(\inst)$, showing that $\inst, \assign \models \exists x \psi$.

\end{description}
 Since $\Gamma \subseteq \mathbf{\Gamma}$ and $\Delta \subseteq \mathbf{\Delta}$, we have that $\inst, \assign \models \bigwedge \Gamma$ and $\inst, \assign \not\models \bigvee \Delta$, implying that $\inst, \assign \not\models \bigwedge \Gamma \rightarrow \bigvee \Delta$, which demonstrates that $\fint(\ant \sar \con)$ is not $\rset$-valid.
\end{proof}

\begin{customthm}{\ref{thm:sound-comp}}[Soundness and Completeness]
$\fint(\ant \sar \con)$ is $\rset$-valid \iffi there exists a proof of $\ant \sar \con$ in $\calc$.
\end{customthm}

\begin{proof}
Follows from \lem~\ref{lem:sound} and \lem~\ref{lem:complete} above.
\end{proof}

\begin{customthm}{\ref{thm:correctness-proof-search}} Let $\rset$ be a rule set, $\db$ be a database, and $\exists\vec{x} \query(\vec{x})$ be a BCQ. Then,
\begin{enumerate}

\item If $\prove(\db \sar \exists\vec{x} \query(\vec{x})) = \true$, then a proof in $\calc$ can be constructed witnessing that $(\db, \rset) \models \exists\vec{x} \query(\vec{x})$;

\item If $\prove(\db \sar \exists\vec{x} \query(\vec{x})) \neq \true$, then a universal model can be constructed witnessing that $(\db, \rset) \not\models \exists\vec{x} \query(\vec{x})$.

\end{enumerate}
\end{customthm}

\begin{proof} The first claim is immediate since if $\prove(\db \sar \exists\vec{x} \query(\vec{x})) = \true$, then $\prove$ constructs a proof of $\db \sar \exists\vec{x} \query(\vec{x})$ as every recursive call of $\prove$ corresponds to a bottom-up application of $\conr$, $\existsr$, or $\seru$. This implies that $(\db, \rset) \models  \exists\vec{x} \query(\vec{x})$ as $\calc$ is sound. Let us therefore argue that the second claim holds. 

We assume w.l.o.g. that $\prove$ does not terminate and show how to extract a counter-model witnessing that $(\db, \rset) \not\models \exists\vec{x} \query(\vec{x})$. Since $\prove$ does not terminate, it generates an infinite derivation in the form of a tree with $\db \sar \exists\vec{x} \query(\vec{x})$ root and which is finite branching (as $\calc$ only consists of unary and binary rules). Hence, by K\"onig's lemma there exists an infinite path
$$
\mathcal{P} = (\Gamma_{0} \sar \Delta_{0}), (\Gamma_{1} \sar \Delta_{1}), \ldots, (\Gamma_{n} \sar \Delta_{n}), \ldots
$$
 of sequents in the infinite derivation such that $\Gamma_{0} = \db$ and $\Delta_{0} = \exists\vec{x} \query(\vec{x})$. We use this path to construct an interpretation $\inst$ such that $\inst \models \db$ and $\inst \not\models \exists\vec{x} \query(\vec{x})$. Let us now define: 
$$
\inst = \mfy{(\bigcup_{i \in \mathbb{N}} \Gamma_{i})} \quad 
\Delta = \bigcup_{i \in \mathbb{N}} \Delta_{i}
$$
 We now argue (1) $\inst \models (\db,\rset)$, and (2) if $\phi \in \Delta$, then $\inst \not\models \phi$. We argue claim (1) first: Since $\db = \Gamma_{0} \subseteq \inst$, we know that $\inst, \assign \models \db$ for any $\inst$-assignment $\assign$ as all $\inst$-assignments map constants in the same way and $\db$ contains only constants; hence, $\inst \models \db$. Let us now argue that $\inst \models \rset$ as well. Let $\assign$ be an arbitrary $\inst$-assignment and $\ru \in \rset$ with $\ru = \forall{\Vx\Vy}\; \beta(\Vx, \Vy) \to \exists{\Vz}\; \alpha(\Vy, \Vz)$. Suppose that $\inst, \assign \models \beta(\Vx, \Vy)$. It follows that for some $\Gamma_{i} \sar \Delta_{i}$ in $\mathcal{P}$, $\assign(\beta(\Vx, \Vy)) \subseteq \Gamma_{i}$. If $\Gamma_{i}, \assign \not\models \exists{\Vz}\; \alpha(\Vy, \Vz)$, then due to the cyclic order $\prec$ (ensuring fairness) imposed during proof-search, eventually $\ru$ will be considered and $\seru$ applied bottom-up. This ensures that $\alpha(\assign(\Vy), \Vz) \in \inst$ with $\Vz$ fresh in the application of $\seru$. Hence, $\inst, \assign \models \ru$, showing that $\inst \models \ru$ as $\assign$ was assumed arbitrary, establishing claim (1).
 
 Let us now argue (2) by induction on the number of logical operators in $\phi$. Note that due to the shape of the input $\db \sar \exists\vec{x} \query(\vec{x})$ and the rules applied during proof-search, only atomic formulae, conjunctions, and existentials will occur in $\Delta$. We define $\assign(t) = t$ for each $t \in \ter$. 
 
\begin{description}

\item[$p(\vec{t}) \in \Delta$.] If $p(\vec{t}) \in \inst$, then there exists some $i$ such that $p(\vec{t}) \in \Gamma_{i} \cap \Delta_{i}$, meaning that $\prove$ would terminate and return $\true$ contrary to our assumption. Thus, $p(\vec{t}) \not\in \inst$, implying that $\inst, \assign \not\models p(\vec{t})$.\\

\item[$\psi \land \chi \in \Delta$.] If $\psi \land \chi \in \Delta$, then there exists a minimal $i$ such that $\psi \land \chi \in \Delta_{i}$. Therefore, by the conjunction step of $\prove$, we know that at some stage $j \geq i$ either $\psi \in \Delta_{j}$ or $\chi \in \Delta_{j}$. By IH, either $\psi \in \Delta$ or $\chi \in \Delta$, meaning either $\inst, \assign \not\models \psi$ or $\inst, \assign \not\models \chi$, showing that $\inst, \assign \not\models \psi \land \chi$ regardless of which case holds.\\

\item[$\exists x \psi \in \Delta$.] Let $t \in \ter(\inst)$. Since $\exists x \psi \in \Delta$, there exists some minimal $i$, $\exists x \psi \in \Delta_{i}$. By the existential step of $\prove$, we know that $\psi(t/x) \in \Delta_{j}$ for some $j \geq i$. By IH, $\inst \not\models \psi(t/x)$, and since $t$ was chosen arbitrarily, we have that  for all $t \in \ter(\inst)$, $\inst, \assign \not\models \psi(t/x)$, showing that $\inst, \assign \not\models \exists x \psi$.

\end{description}
 This concludes the proof of claim (2). As a consequence, since $\exists \vec{x} q(\vec{x}) \in \Delta$, it follows that $\inst, \assign \not\models \exists \vec{x} q(\vec{x})$. This fact, in conjunction with claim (1), establishes that $(\db, \rset) \not\models \exists \vec{x} q(\vec{x})$.
\newcommand{\nats}{\mathbb{N}} 
 Last, we argue that $\inst$ is a universal model for $(\db,\rset)$. Let $\instb$ be any model of $(\db, \rset)$. We first define a sequence $s' = \pair{\Gamma_i'}_{i \in \nats}$ relative to the sequence $s = \pair{\Gamma_i}_{i \in \nats}$ of antecedents from the path $\mathcal{P}$ accordingly: (1) $\ant_{0}' = \ant_{0} = \db$, (2) $\ant_{n+1}' = \min\{\ant_{k} \mid \ant_{k} \neq \ant_{n}', n \leq k \}$. 
 
 We inductively build an increasing sequence of homomorphisms $\pair{h_i : \Gamma_{i}' \to \instb}_{i \in \nats}$. By the definition of $s'$, we have that $\Gamma_{0}' = \Gamma_0 = \db$. As $\db \subseteq \instb$, we simply define $h_0$ to be an identity on the domain of $\db$. Now, assume that $h_i$ is a homomorphism from $\Gamma_{i}' \to \instb$. We will build a homomorphism $h_{i + 1} : \Gamma_{i + 1}' \to \instb$. 
 Observe, $\Gamma_{i + 1}'$ can only be obtained from $\Gamma_{i}'$ by an application of $\seru$ with $\ru = \forall{\Vx\Vy}\; \beta(\Vx, \Vy) \to \exists{\Vz}\; \alpha(\Vy, \Vz)$. Thus, from Observation~\ref{obs:B}, a trigger $\tau = (\eru, \mu)$ exists in $\Gamma_{i}'$. Let $\mu(\Vx,\Vy) = \Vt,\Vt'$ where $\Vx,\Vy$ is the tuple of universally quantified variables in $\body(\eru)$. Note that there exists a trigger $(\eru, \mu')$ in $\instb$ such that $\mu'(\Vx,\Vy) = h_i(\Vt,\Vt')$. However, as $\instb$ is a model of $(\db, \rset)$ we know that there exists a $\instb$-assignment $\mu''$ mapping $\head(\eru)$ into $\instb$ that agrees on $\Vy$ with $\mu'$. As $\Gamma_{i + 1}'$ is created by applying the trigger $\tau$ to $\Gamma_{i}'$, we know that there exists a $\ant_{i}'$-assignment $\assign'''$ that maps $\head(\eru)$ to $\Gamma_{i + 1}'$ which agrees with $\assign$ on $\Vy$. From those facts we can see that $h_{i + 1}$ can be constructed from $h_{i}$ by ensuring that $h_{i + 1} \circ \assign'''$ is identical to $\assign''$ when restricted to $\Vz$. Note that $h_{i + 1} \circ \assign'''$ already agrees on $\Vx$ and $\Vy$ with $\assign''$. Finally, as the homomorphisms $h_i$ form an ascending sequence, we take the limit $\bigcup_{i \in \nats}h_i$ as the homomorphism from $\inst$ to $\instb$. 
\end{proof}

\section{Proofs for \sect~\ref{sec:simulations}}

\begin{customlem}{\ref{lem:mutual-transformation}}
For every rule set $\rset$, $n \in \mathbb{N}$, and instances $\inst_1, \ldots, \inst_n$ there exists a chase derivation $(\inst_i, \trig_i)_{i \in [n - 1]}$ \iffi 
there exists an $\rset$-derivation of $\inst_{1} \sar \emptyset$ from $\inst_{n} \sar \emptyset$.
\end{customlem}
\begin{proof}
$(\Rightarrow)$ Follows from Observation~\ref{obs:B}.
$(\Leftarrow)$ Let $\prf$ be a bottom-up $\rset$-derivation of $\inst_{1} \sar \emptyset$ from $\inst_{n} \sar \emptyset$. By the definition of an $\rset$-derivation, only $\seru$ rules are applied in $\prf$. We may transform $\prf$ into the chase derivation $(\inst_i, \trig_i)_{i \in [n - 1]}$ by Observation~\ref{obs:B}.
\end{proof}

\begin{customthm}{\ref{thm:simulation}}
 Let $\rset$ be a rule set. A chase derivation $(\inst_i, \trig_i)_{i \in [n]}$ witnessing $(\db, \rset) \models \exists \vec{x} \query(\vec{x})$ exists \iffi a proof in $\calc$ of $\db \sar \exists \vec{x} \query(\vec{x})$ exists.
\end{customthm}

\begin{proof} $(\Rightarrow)$ By assumption, an $\inst_{n+1}$-assignment $\assign$ exists such that $\inst_{n+1}, \mu \models \query(\vec{x})$. It follows that $\inst_{n+1} \models \exists \vec{x} \query(\vec{x})$, which implies that a proof $\prf$ of $\inst_{n+1} \sar \exists \vec{x} \query(\vec{x})$ can be constructed in $\calc$ by completeness (\thm~\ref{thm:sound-comp}). By our assumption and \lem~\ref{lem:mutual-transformation}, we obtain a derivation $\prf'$ of $\db \sar \emptyset$ from $\inst_{n+1} \sar \emptyset$. Moreover, by Observation~\ref{obs:X} we know that $\prf'$ uses only $\seru$ rules, and by repeated application of Observation~\ref{obs:A}, we may transform $\prf'$ into a new derivation $\prf''$ of $\db \sar \exists \vec{x} \query(\vec{x})$ from $\inst_{n+1} \sar \exists \vec{x} \query(\vec{x})$. By affixing $\prf$ atop $\prf''$ we obtain a proof in $\calc$ of $\db \sar \exists \vec{x} \query(\vec{x})$.

$(\Leftarrow)$ Suppose $\prf$ is a proof of $\db \sar \exists \vec{x} \query(\vec{x})$ in $\calc$. Observe that $\prf$ can only use $\seru$, $\conr$, and $\existsr$ rules as $\db$ is an instance and $\exists \vec{x} \query(\vec{x})$ is a BCQ. By Observation~\ref{lem:con-ex-permutabove-ER}, we may transform $\prf$ into a new proof which consists of two fragments: a top proof $\prf'$ of $\inst \sar \exists \vec{x} \query(\vec{x})$ consisting only of $\conr$ and $\existsr$ applications, and a bottom derivation $\prf''$ of $\db \sar \exists \vec{x} \query(\vec{x})$ from $\inst \sar \exists \vec{x} \query(\vec{x})$ consisting only of $\seru$ applications. By the soundness of $\calc$ (\thm~\ref{thm:sound-comp}), there exists an $\inst$-assignment $\assign$ such that $\assign(\query(\vec{x})) \subseteq \inst$. By Observation~\ref{obs:A} and Lemma~\ref{lem:mutual-transformation}, we can transform $\prf''$ into a chase derivation $(\inst_i, \trig_i)_{i \in [n]}$ with $\inst_{n+1} = \inst$. As $\assign(\query(\vec{x})) \subseteq \inst$, this chase derivation witnesses $(\db, \rset) \models \exists \vec{x} \query(\vec{x})$.
\end{proof}

\begin{customthm}{\ref{thm:chase-prove-equiv}}
Let $\rset$ be a rule set, $\db$ be a database, and $\exists \vec{x} q(\vec{x})$ be a BCQ. Then,
\begin{enumerate}

\item $\prove(\db \sar \exists \vec{x} q(\vec{x})) = \true$ \iffi there is an $n \in \mathbb{N}$ such that $\ch_{n}(\db,\rset) \models \exists \vec{x} q(\vec{x})$ \iffi $\ch_{\infty}(\db,\rset) \models \exists \vec{x} q(\vec{x})$;

\item If $\prove(\db \sar \exists \vec{x} q(\vec{x})) \neq \true$ , then $\inst \equiv \ch_{\infty}(\db,\rset)$ with $\inst$ the witnessing counter-model.

\end{enumerate}
\end{customthm}

\begin{proof} We note that claim (2) is straightforward as both $\inst$ and $\ch_{\infty}(\db,\rset)$ are universal models of $(\db,\rset)$. Regarding (1), we argue the first equivalence as the second equivalence is trivial. 

($\Rightarrow$) By assumption, a proof $\prf$ of $\db \sar \exists \vec{x} q(\vec{x})$ exists in $\calc$. By \thm~\ref{thm:simulation}, a chase derivation $(\inst_i, \trig_i)_{i \in [n]}$ witnessing $(\db, \rset) \models \exists \vec{x} \query(\vec{x})$ can be constructed from $\prf$. At some step $n \in \mathbb{N}$ of the chase an instance $\inst$ will be generated such that $\inst_{n+1} \subseteq \inst$, showing that $\ch_{n}(\db,\rset) \models \exists \vec{x} q(\vec{x})$. 

($\Leftarrow$) If $\ch_{n}(\db,\rset) \models \exists \vec{x} q(\vec{x})$, then $\ch_{n}(\db,\rset)$ can be linearized into a chase derivation $(\inst_i, \trig_i)_{i \in [n]}$ witnessing $(\db, \rset) \models \exists \vec{x} \query(\vec{x})$, from which a proof $\prf$ of $\db \sar \exists \vec{x} q(\vec{x})$ can be constructed in $\calc$ by \thm~\ref{thm:simulation}. By \thm~\ref{thm:correctness-proof-search}, $\prove(\db \sar \exists \vec{x} q(\vec{x})) = \true$.
\end{proof}

\end{document}